\newtheorem{theorem}{Theorem}
\newtheorem{prop}{Proposition}
\theoremstyle{definition}
\newtheorem{definition}{Definition}
\begin{document}

\preprint{APS/123-QED}

\title{Integrability and chaos in the quantum brachistochrone problem.}% Force line breaks with \\
%\thanks{A footnote to the article title}%

\author{S. Malikis}
\email[]{malikis@lorentz.leidenuniv.nl}
\affiliation{Instituut-Lorentz, Universiteit Leiden, Leiden, The Netherlands}%

\author{V. Cheianov}
\affiliation{%
 Instituut-Lorentz, Universiteit Leiden, Leiden, The Netherlands}%

\date{\today}% It is always \today, today,
             %  but any date may be explicitly specified

\begin{abstract}
The quantum brachistochrone problem addresses the fundamental challenge of 
achieving the quantum speed limit in applications aiming to 
realize a given unitary operation in a quantum system. Specifically, 
it looks into optimization of the transformation of quantum states 
through controlled Hamiltonians, which form a small subset in the space 
of the system's observables. Here we introduce a broad family of completely
integrable brachistochrone protocols, which arise from a judicious choice 
of the control Hamiltonian subset. Furthermore, we demonstrate 
how the inherent stability of the completely integrable protocols makes them 
numerically tractable and therefore practicable as opposed to their non-integrable counterparts. 
%\begin{description}
%\item[Structure]
%\texttt{\item Introduction }  
%\end{description}
\end{abstract}

%\keywords{Suggested keywords}%Use showkeys class option if keyword
                              %display desired
\maketitle

%\tableofcontents
\section{Introduction}
Determining the optimal time required to achieve a unitary operation in a given quantum system holds fundamental significance, particularly in the context of the Quantum Speed Limit \cite{PhysRevLett.103.240501,PhysRevApplied.20.014031}. It also carries important implications in the context of quantum control \cite{Khaneja_2001,Dong_2010,Koch_2022} such as quantum computing \cite{nielsen2006,nielsen_chuang_2010}, shortcuts to adiabaticity \cite{Gu_ry_Odelin_2019} and quantum state preparation\cite{Girolami_2019,Yuan_2023}. The quantum brachistochrone problem belongs to the class of formal problems aiming to facilitate this task. It was first introduced by Carlini et al. \cite{carlini2006time}. It can also incorporate other optimization approaches, like the quantum Zermelo navigation \cite{Russell_2015}. Moreover, it has been used to realize experimentally different optimal quantum qubit gates \cite{PhysRevLett.117.170501,han2020experimental,PhysRevResearch.3.043177}. 

At a conceptual level, the quantum brachistochrone problem stems from the challenge of realising a specific unitary transformation within a quantum system through a process known as 'driving.' This process entails a time-dependent manipulation of the system's Hamiltonian within a constrained parameter space defined by the architecture of quantum hardware. While numerous trajectories exist in this Hamiltonian parameter space that yield the same unitary evolution operator, of primary importance is the one that accomplishes the desired outcome with the smallest effort possible. In essence, the quantum brachistochrone is defined as the global minimum of a quantum cost function, which is the dimensionless product of the quantum computation time and the spectral 
norm of the control Hamiltonian. The mathematical intricacy of the quantum brachistochrone problem arises from the limitation imposed on the space of control Hamiltonians, typically confined to a small subset of all system observables.

The formal statement of the brachistochrone problem consists in specifying the Hilbert space, the choice of 
the subset of allowed Hamiltonians 
and the desired unitary. Given such data one proceeds to 
solving the optimisation problem, which translates into a boundary value problem for a system of non-linear ODEs. Several cases that can be solved analytically have been presented in \cite{Carlini_2011,Carlini_2012,Carlini_2013,yang2022minimumtime}. However, the problem in its generality remains
analytically intractable and, in the case of moderately large systems, numerically hard as the common numerical recipes such as the shooting method \cite{burden2011numerical} fail when the initialization of the search starts away from true solution \cite{bulirsch:02,NoceWrig06}. In an attempt to tackle this difficulty, an alternative numerical approach has been explored in \cite{wang2015quantum}, where the problem was reformulated into searching for geodesics curves in the unitary group. Moreover, in \cite{wang2021quantum}, they re-formulate the problem by exploiting the additional symmetries of the prolem. Successful as this numerical recipe may be, it is naturally limited to gates of small dimensionality.

It is worth noting that the dimension of the system of 
quantum brachistochrone ODEs scales exponentially with the size of the quantum system. As the number of dynamical variables increases and the dynamics becomes more unstable 
solving the problem in full generality eventually becomes 
a formidable task. In fact, computational complexity 
endemic to all chaotic non-linear equations \cite{ott2002chaos} is likely to place the quantum brachistochrone problem 
into the same category of numerically hard problems as \textit{ e.g.} 
the long-term forecast in the Lorenz System \cite{DeterministicNonperiodicFlow}. For this 
reason, it is important to identify special cases, which demonstrate stable dynamics, and which can be scaled to 
large system sizes. A particularly important subset of such cases is formed of completely integrable systems. Not only 
do such systems exhibit numerically stable trajectories, but they also (at least in principle) admit for various reductions and solution by quadratures. Furthermore, by virtue of the phenomenon of KAM-stability \cite{kolmogorov1,moser1,arnold1,arnold2, moser2}, such systems give access to a much broader class of numerically easy non-integrable brachistochrone problems achieved through small deformations of the space of allowed Hamiltonians.

In this paper, we present a class of completely integrable brachistochrone problems. This class, in particular, contains the exactly solved cases discovered in previous literature \cite{Carlini_2011,Carlini_2012,
Carlini_2013,yang2022minimumtime}.
Complete integrability is achieved through the suitable choice of the space of control Hamiltonians, which are regarded as a generating set of a Lie algebra. Complete integrability arises from a relationship between the 
the control Hamiltonian subspace and Cartan decompositions of the associated 
Lie group ($\mathfrak{su}(n)$ in our case). The construction of the integrable cases is inspired by the classical integrable Lie-Poisson Hamiltonian systems \cite{Semenov1981}. After describing the general construction of the class 
of completely integrable quantum brachistochrone problems, we proceed to 
the investigation of the numerical stability of quantum brachistochrone equations in systems with a small Hilbert space. We find that 
the numerical stability of integrable protocols is significantly better 
than of an arbitrary chaotic protocol. Furthermore, the numerical data 
hint that with increasing dimension of the Hilbert space, 
the boundary value problem in a generic non-integrable 
case becomes increasingly hard to solve.

The structure of the paper is as follows. First we give a general overview of the quantum brachistochrone and discuss its general properties. Thereafter we describe the construction of a class of integrable brachistochrone protocols. Then we perform numerical investigation of the stability of integrable and non-integrable protocols showing the qualitative difference. We, furthermore discuss evidence that the solution of the problem becomes increasingly more difficult with increasing dimension of the Hilbert space.

\section{General Overview}
\subsection{The optimisation problem and the cost function}
In this paper we only consider quantum systems having a finite-dimensional computational Hilbert space. In practical hardware problems, such a space is usually associated with the soft (low-energy) part of the spectrum of a physical Hamiltonian. The soft spectrum needs to be separated from the rest of the energy eigenvalues by a sufficiently large energy gap in order to avoid unwanted excitations of the system outside the confines of the computational space. 
All observables discussed below are restrictions of the experimentally accessible physical observables to the computational Hilbert space.

Imagine that we want to realise a unitary operator $\hat U_d\in SU(n)$ on 
an n-dimensional computational Hilbert space as 
an evolution operator generated by a time-dependent 
control Hamiltonian. More precisely, 
consider the evolution operator $U(t)$ satisfying the 
Schr{\"o}dinger equation
\begin{equation}
\label{schrodinger}
    i \partial_t \hat U(t)= \hat H(t) \hat U(t),\qquad \hat U(0)=\mathbb{I}.
\end{equation}
We are looking for a time-dependent Hamiltonian 
$H(t)$ such that at the end of computation $t=t_f$, the  unitary evolution operator will satisfy $\hat U(t_f)=\hat U_d$. Generally, for a given $U_d$ there are 
infinitely many such Hamiltonians. The optimisation problem arises from the task of finding the "shortest" trajectory $\hat H(t),$ which is crudely the one that minimizes the protocol duration $t_f$ (see below for clarification). The problem becomes non-trivial if we assume that $\hat H(t)$ can only be chosen from a subspace of the space of all quantum observables.

Before proceeding to the mathematical specifics we recall some basic facts about the quantum brachistochrone problem. Firstly, we note that trivial multiplication of $H(t)$
by a number $\lambda$ leads to the rescaling of the computation time 
$t_f\mapsto t_f/\lambda.$ For this reason, the computation time itself 
is not a good choice of the optimisation functional. Rather, one introduces a dimensionless cost function $t_f\times \max_t ||\hat H(t)||$ where 
$||\dots ||$ stands for the Frobenius norm of an operator. The 
dimensionless optimisation functional admits for further simplification
owing to a gauge symmetry of the original optimisation problem. 
We note that the Schr\"{o}dinger equation is invariant 
under the redefinition $t \mapsto t'=\varphi(t) $ and $\hat H(t) \mapsto \hat H'(t)=\dot \varphi(t) H(\varphi(t)),$ where $\varphi(t)$
is any monotonically increasing function of time satisfying $\varphi(0)=0.$ 
This implies that one solution $H_*(t)$ to the quantum 
brachistochrone problem one generates a family of equivalent solutions parameterised by the gauge function $\varphi(t).$
In particular, one can easily see that there exists a gauge choice such that $\dot \varphi(t) ||\hat H_*(\varphi(t))|| = 
t_f \max_t ||\hat H_*(t)||,$ whilst $\varphi(t_f)=1.$
With this choice of gauge the cost function can be 
written as 
\begin{equation}
\label{varS}
S=\int_{0}^1 dt ||\hat H(t)||
\end{equation}
and the quantum brachistochrone problem becomes a 
variational problem for the functional $S$ with the 
boundary conditions $\hat U(0)=\mathbb I,$ $ 
\hat U(1) = \hat U_d.$

\subsection{The AB decomposition}
The space of all physical observables constrained to 
the n-dimensional computational Hilbert space coincides with 
the space of all Hermitian endomorphisms of that
space. In a given basis, the non-trivial endomorphisms
are represented by traceless Hermitian $n\times n$ 
matrices, which form a vector space naturally endowed 
with a structure of the Lie algebra $\mathrm{su}(n).$
Let $\hat \gamma_k$ be some basis in the vector 
space of $n\times n$ traceless Hermitian matrices,
 for instance the generalized Gell-Mann 
matrices \cite{Bertlmann_2008},
then $\{\hat e_k\},$ where $\hat e_k = - i \hat \gamma_k,$
will form a basis in the defining 
representation of $\mathfrak{su}(n).$ 
In the following we will assume the trace-form orthonormality condition 
${\rm Tr}(\hat \gamma_i  \hat \gamma_j)= \delta_{ij}.$
In a given basis, the time-dependent Hamiltonian can be represented 
by a trajectory in a $n^2-1$-dimensional Euclidean coordinate 
space 
\begin{equation}
\hat H(t) = \sum_{j=1}^{n^2-1} a_j(t) \hat \gamma_j.
\end{equation}

If one had access to the whole Lie algebra, the 
extremals of the functional \eqref{varS}
would be delivered by the time-independent Hamiltonians 
of the form $\hat H_0=i \log(\hat U_d),$ labelled by the 
choice of the branch of the logarithm function. 
In practice, however, the space of available Hamiltonians 
may be restricted, in which case time-independent solutions to the optimisation problem do not generally exist. In such as situation, the optimisation problem for the functional 
\eqref{varS} may still admit for a solution, albeit with a time-dependent trajectory $a_j(t)$ lying within the permitted component of the operator space. Obviously, in order for 
the optimisation problem to have a solution for any 
$\hat U_d\in \mathrm{SU}(n),$ the permitted subspace of 
the operator space should generate the whole space 
of observables $\mathfrak{su}(n)$ in the Lie-algebraic sense. To formalise the constrained optimisation problem we introduce the following definition:
\begin{definition}
\textit{The AB decomposition}: Let $\mathbb{G}$ 
be the vector space of all observables (that is traceless Hermitian operators) on the computational space and let 
$\mathbb A \subset \mathbb G$ be the subspace 
of physically accessible Hamiltonians, 
that is any $\hat H(t)$ has to satisfy 
$\hat H(t) \in \mathbb A $ at all $t.$ 
Then the following orthogonal decomposition 
will be called the AB decomposition
of $\mathbb G:$
\begin{equation}
\label{ABdef}
    \mathbb{G}=\mathbb{A}\oplus\mathbb{B}.
\end{equation}
Here the $\mathbb A$ and $\mathbb B$ subspaces are 
assumed to be orthogonal relative to the trace form.
\end{definition}

We note, that the AB decomposition of the 
space of observables directly translates into 
a decomposition of the corresponding Lie 
algebra. Let $\mathfrak g $
be the defining representation of 
$\mathfrak{su}(n).$ Then the decomposition 
\eqref{ABdef} induces the following 
decomposition of $\mathfrak g$
\begin{equation}
\label{abdef}
\mathfrak g = \mathfrak a + \mathfrak b,
\end{equation}
where 
$
\mathfrak a = \{-i \hat A|\hat A \in \mathbb A\}$, and $ 
\mathfrak b = \{-i \hat B|\hat B \in \mathbb B\}.$

Given an AB decomposition, we can
choose the orthonormal basis of $\mathbb G$ 
in the form $\{\hat \gamma_i \}=\{\hat A_i\}\cup\{\hat B_j\},$ where $\{\hat A_i\}$ and $\{\hat B_j\}$ are the 
orthonormal bases of $\mathbb A$ and $\mathbb B$ 
respectively. The constraint on the control Hamiltonian 
can then be stated as
\begin{equation}
    \mathrm{Tr}\left(\hat H(t) \hat B_j\right)=0, \qquad j=1, \dots , \dim \mathbb B
    \label{constraint}
\end{equation}
with an explicit solution in the form
\begin{equation}
    \hat H = \sum_i \alpha_i \hat A_i. 
\end{equation} 

\begin{definition}
\label{maxdecoposition}
\textit{Operator controllable AB decomposition}: An AB decomposition is called operator controllable if for every $\hat U_d \in SU(n)$, there exists a 
continuous trajectory
$\hat H: [0, 1]\to \mathbb{A}$ that generates $\hat U_d$ in the sense of the 
boundary value problem \eqref{schrodinger}.
\end{definition}

Obviously, for the AB decomposition to be operator-controllable the Lie algebra $\mathfrak g=\mathfrak{su}(n)$ 
must not have any proper subalgebras containing 
$\mathfrak a.$ In other words, $\mathfrak a$ has to be a generating set of $\mathfrak g$ \cite{1220755}.
It is worth noting that operator controllability, which we focus on here, is a stronger requirement than the (pure) state controllability \cite{Dong_2010}.

To construct $\hat U_d$, we have to perform a constrained 
optimisation of the functional \eqref{varS}. It is 
advantageous to state such an optimisation problem in 
terms of the Lagrangian calculus of variations on 
the Lagrangian manifold $\rm{SU}(n)$. To this 
end, we note that 
\begin{equation}
\hat H(t) = i\partial_t \hat U(t)\hat U^{\dagger}(t)
\end{equation}
is an element of the tangent space, which 
has the meaning of the velocity, and 
that the condition \eqref{constraint} is a set of 
$m=\dim \mathbb B$ non-holonomic constraints, which can be imposed with the help of $m$ Lagrange multipliers. 
The corresponding Lagrange functional takes the local 
form
\begin{equation}
\label{action}
    S=\int_0^{1}dt \left[\sqrt{\mathrm{ Tr}(\partial_t \hat{U} \partial_t \hat{U}^{\dagger})}+i\sum_{j=1}^{\dim \mathbb B} \lambda_{j}Tr(\hat B_j \partial_t \hat{U} \hat{U}^{\dagger})\right].
\end{equation}
The Lagrange functional \eqref{action} can be viewed as 
an action describing constrained motion of a point particle 
on a group manifold.

The Euler-Lagrange equations for the 
functional~\eqref{action} 
(see Appendix for details) consist of equations \eqref{constraint}, due to variation with respect to 
the Lagrange multipliers $\lambda_i,$ and the 
quantum brachistochrone equation 
% \begin{equation}
% \label{generalbrach}
%     \partial_t \hat{H} +\sum_{\kappa}\Big( i\lambda_{\kappa}[\hat{H},\hat{B}_{\kappa}]+\dot{\lambda}_{\kappa} \hat{B}_{\kappa}\Big)=0,
% \end{equation}
\begin{equation}
\label{generalbrach}
\frac{d}{dt} (\hat H + \hat D) + i [\hat H , \hat D]=0
\end{equation}
where we introduced the operator 
\begin{equation}\label{Ddef}
\hat D=\sum_i \lambda_i \hat B_i.
\end{equation}
% By considering $\delta S$ with respect to $\hat U^{\dagger}$ and $\lambda_j$, respectively, we will acquire the equations of motion (see Appendix for details). The equation of motion we get for $\lambda_j$:
% \begin{equation}
% \label{generalbrach}
%     \partial_t \hat{H} +\sum_{\kappa}\Big( i\lambda_{\kappa}[\hat{H},\hat{B}_{\kappa}]+\dot{\lambda}_{\kappa} \hat{B}_{\kappa}\Big)=0
% \end{equation}
% This was the initial assumption about the components Hamiltonian was allowed to have. The non-trivial part of the equations of motions is the variation wrt $\hat U^{\dagger}$. By assuming unitary $\hat U$ and Hermitean $H(t)$ for all times (see Appendix):
% \begin{equation}
% \label{generalbrach}
%     \partial_t \hat{H} +\sum_{\kappa}\Big( i\lambda_{\kappa}[\hat{H},\hat{B}_{\kappa}]+\dot{\lambda}_{\kappa} \hat{B}_{\kappa}\Big)=0
% \end{equation}
Equation \eqref{generalbrach} was first derived in \cite{carlini2006time,carlini2007time} using a slightly different approach. In a given orthonormal basis $\{\hat \gamma_i\}$ equation \eqref{generalbrach} is equivalent to a system of 
non-linear differential equations for the coordinates $\alpha_i, \lambda_i$ on the tangent bundle of the unitary group \cite{wang2015quantum}
\begin{gather}
    \dot a_i(t)=i \sum_j \lambda_j \text{Tr}(\hat H [\hat A_i, \hat B_j]), \nonumber \\ \dot \lambda_i(t)=i \sum_j \lambda_j \text{Tr}(\hat H [\hat B_i, \hat B_j]). 
    \label{systembrachistochrone}
\end{gather}
The Euler-Lagrange equations are supplemented by the 
initial condition $\hat U(0)=\mathbb{I}$ and the condition that 
$\hat U(1)=\hat U_d.$
One can see that the brachistochrone problem turns into the boundary value problem for a set of non-linear ODEs. The 
form of these equations is completely determined by the choice
of the AB decomposition.

Generally, equations \eqref{systembrachistochrone} do not 
admit for an analytic solution. In order to solve the 
boundary value problem, one typically employs numerical methods, such as the shooting method or gradient descent routines \cite{burden2011numerical,bulirsch:02,NoceWrig06}. However, as the number of dynamical variables $(\alpha_i, \lambda_j)$ increases quadratically with the rank of the group, chaos kicks in making the numerical routines increasingly inefficient. Still, even for large $n$ 
one can identify particularly serendipitous AB decompositions, for which numerical convergence remains very good within either the entire phase space or, at least, a sufficiently large basin of stability. 
A natural class of such good AB decompositions is the one associated with {\it completely integrable} quantum brachistochrone 
equations. It is worth noting, that apart from being well-behaved 
in terms of the Lyapunov stability, 
completely integrable cases admit, at least 
in principle, for solutions in terms of purely 
algebraic equations. Furthermore small non-integrable deformations of such integrable cases will give rise 
to equations with large stability islands. 
These observations prompt a natural task of identification and classification of all completely integrable AB 
decompositions.

% In the remaining part of the paper we discuss how such good AB decompositions, or at least some part of them, are associated with {\it completely integrability} of the 
% quantum brachistochrone equations. 
% We also 
% In this case, On the other hand, finding integrable cases are numerically stable and even allow closed-form expressions for their solutions. So identifying such cases are highly important for the study of these problems. 

%These equations are general and only after specifying the boundary conditions, one may know which $\hat U_d$ is generated. In particular, by imposing the initial condition for the initial Hamiltonian and the Lagrange multipliers, the explicit form of $\hat H(t)$ is acquired, through which $\hat U_d$ is generated. Finding the correct initial conditions, then it's the essential step in order to determine the cost function; nevertheless, this task is non-trivial, since non-linear ODEs are involved. One can employ different optimization methods, like gradient descent or shooting method to solve this problem, but these routines tend to be slow due to the the non-linear nature of the problem. Moreover, since the number of dynamical variables $(\alpha_i, \lambda_j)$ scales quadratically with the degree of $SU(n)$, chaos kicks in and makes even the numerical study of the problem quite difficult. Therefore finding integrable cases are much appreciated, because both there are closed-form expressions for the solutions and they exhibit stability under numerical studies. 

In the following sections, we present a
Lie-algebraic construction of a large class of AB decompositions leading to completely integrable brachistochrone equations. However, before delving into these details, we shall conclude the present section with discussion of some general properties
of the brachistochrone problem.

\subsection{Properties of the Quantum Brachistochrone Equations}

First, we derive two useful conservation laws associated 
with equations \eqref{systembrachistochrone}.
By multiplying Eqs.~\eqref{systembrachistochrone} to $a_i$ and $\lambda_i$ respectively, performing summations over all $i$ and utilising the cyclicity of trace, we find
\begin{equation}
\frac{d ||\hat H(t)||^2_F}{dt}=0, \quad \frac{d ||\hat D(t)||^2_F}{dt}=0.
\end{equation}

This implies that the initial conditions, both on $\hat H(t)$ and $\hat D(t)$, determine the cost function of the protocol. Actually this stems from a wider symmetry. The equation \eqref{generalbrach} is essentially
\begin{equation}
\label{lax1}
    \frac{d(\hat H+\hat D)}{dt}=\frac{i}{2}[\hat{H}+\hat{D},\hat{H}-\hat{D}].
\end{equation}
Therefore the operators $\hat H+\hat D, \hat H-\hat D$ form a Lax pair. This means that quantities $F_k=Tr((\hat H+\hat D)^k)$ for $k\in \mathbb{N}$ are conserved, even though not all $F_k$'s are algebraically independent. 

Next, we remark that the Lagrange functional \eqref{action} is invariant under global right shifts $U\mapsto U g,$ where $g\in \rm SU(N).$ By virtue of Noether's theorem, this symmetry results in the conservation
of the generalised angular momentum 
\begin{equation}
\label{conservationlaw}
   \partial_t( \hat U^{\dagger} (\hat H+\hat D) \hat U )=0.
\end{equation}
This equation can help determining the evolution of the optimal $\hat H(t),\hat D(t)$. If one knows the initial state at $t=0$ and the unitary at $t=t_f$ (and not the path from $[0,t_f]$) one can find the final state.
This conservation law is useful in our problem since it creates a relationship between the initial, final configuration of $\hat{H}(t)+\hat{D}(t)$ and the final unitary operator $\hat U_d$. Note that a similar property holds for the operator $\hat U_{-D}$ the operator that is generated through the -D operator: $i \partial\hat U_{-D}=-\hat D \hat U_{-D}$
\begin{equation}
    \hat U_{-D}^{\dagger}(t) \Big(\hat H(t)+\hat D(t)\Big) \hat U_{-D}(t)=  \hat H(0)+\hat D(0)
\end{equation}

 Combining these we can get the commutation relation
\begin{equation}
     [\hat U^{\dagger}_{-D} \hat U,\hat H(0)+\hat D(0)]=0.
\end{equation}
Moreover, by expressing the conservation law \eqref{conservationlaw} and using \eqref{schrodinger}:
\begin{equation}
     i \partial_t \hat{U}(t)  = \hat U(t) (\hat H(0)+\hat D(0)) - \hat D(t) \hat U(t).
\end{equation}
we can rewrite the evolution operator in the following 
form (considering $\hat U(0)= \hat U_{-D}(0)= \mathbb{I} $):
\begin{equation}
\label{multiplyunitary}
   \hat U(t) = \hat{U}_{-D}(t) \exp(- i (\hat H(0)+\hat D(0))t)
\end{equation}
We shall use this equation later in our discussion of a special 
case of completely integrable brachistochrone equation.

Conservation laws and related equations following from 
the general symmetries of the problem are useful, however they are not sufficient to make the problem completely solvable. In the next section we show how to narrow down the set of brachistochrone problems 
to those, which are completely integrable in the 
Arnold–Liouville sense.

\section{Integrable SU(N) brachistochrone equation}

Various definitions of integrability exist. Historically, integrability emerged as the property of equations of motion to admit for an analytic solution, achieved through techniques like separation of variables. Subsequently, a mathematical framework evolved, wherein integrability is construed as the commutativity of a sufficient number of independent Hamiltonian flows within a system's phase space. Both manifestations of integrability engender a stable and predictable behavior in a system. Both perspectives will be used in the present section.

We begin with examining the special case of 
time-independent Lagrange multipliers. The AB decomposition resulting in this integrable system 
was introduced in \cite{yang2022minimumtime}.

\subsection{Complete integrability with time-independent Lagrange multipliers}

Consider a class of trajectories such that Lagrange multipliers $\lambda_i$ are constants of motion, i.e.
\begin{equation}
\label{lambdaconst}
\dot \lambda_i = 0 , \qquad\text{for all}\qquad i = 1, \dots, \dim \mathbb B
\end{equation}
For such trajectories the corresponding differential equations for $\alpha_i$'s in \eqref{systembrachistochrone} can be solved explicitly
\begin{equation}
\label{asol}
    \textbf{a}(t)=\exp(\hat{C}\, t)\textbf{a}(0),\quad \hat{C}_{ab}=\sum_i f_{aib} \lambda_i.
\end{equation}
In this expression $f$ is the structure constant of the Lie algebra of observables, i.e. $[\hat e_i,\hat e_j]=\sum_{k} f_{ijk}\hat e_k,$ the index $i$ runs through 
the basis of the $\mathbb B$ subspace, while the indices $a$ and $b$ run 
through the elements of the subspace $\mathbb A.$
Also we denoted the set of all $a_i$'s in a vector $\textbf{a}(t)=\begin{pmatrix}
\begin{array}{lllll}
    a_1(t) & \cdots & 
     &  a_N(t)
\end{array}
\end{pmatrix}^T$.

%In particular, if  the rhs of certain equations of system \eqref{systembrachistochrone} is zero, this means the corresponding variable is a conserved quantity. Moreover, as we stated in the introduction finding the optimal time (the brachistochrone) of a particular $\hat U_d$ means that one has to find the correct initial conditions that solve equations \eqref{schrodinger}, \eqref{systembrachistochrone} with $\hat U(1)=\hat U_d$. This means that one has to solve several PDEs. Integrability of the problem in our setting means that we should solve some algebraic equations instead (and not differential ones).

%In particular, from \eqref{systembrachistochrone}, if the derivative of all Lagrange multipliers are zero, the form of $H(t)$ can be written explicitly. By assuming constant $\lambda_i$'s:
%\begin{equation}
%    \textbf{a}(t)=\exp(\hat{C}(t))\textbf{a}(0),\quad \hat{C}_{ab}=\sum_i f_{aib} \lambda_i t
%\end{equation}
%In the previous expression $f$ is the structure constant of %the Lie algebra of operator, i.e. $[\hat \gamma_i,\hat \gamma_j]=\sum_{k} f_{ijk}\hat \gamma_k$. Also we denoted the set of all $a_i$'s in a vector $\textbf{a}(t)=\begin{pmatrix}
%\begin{array}{lllll}
 %   a_1(t) & \cdots & 
%     &  a_N(t)
%\end{array}
%\end{pmatrix}^T$. 

The explicit solution \eqref{asol} of the reduced 
system \eqref{systembrachistochrone}, is a significant step forward, 
however, it does not automatically guarantee the existence of an 
algebraic solution for the evolution law of the group element 
given in Eq.~\eqref{schrodinger}. Next we demonstrate 
that in the case of time-independent Lagrange multipliers such an algebraic 
solution does indeed exist.
We note that if Lagrange multipliers are constants of motion then 
the operator $\hat D$ is also time-independent.  This simplifies \eqref{multiplyunitary}:
\begin{equation}
\label{simpmultiplyunitary}
    \hat{U}(t)=\exp(i \hat D(0) t) \exp(-i(\hat H(0)+\hat D(0))t).
\end{equation}
For $t=1$ this becomes a algebraic equation defining the initial values $a_i(0)$ and $\lambda_i$ in terms of the desired operator $\hat U_d.$ In combination with 
Eq.~\eqref{asol} it provides a complete algebraic 
solution for the brachistochrone problem. 
It is worth noting, that despite a tremendous reduction 
in the complexity of the original problem, 
equation~\eqref{simpmultiplyunitary}
remains non-trivial due to its non-linearity and 
multivaluedness of its solutions. 

% If $[\hat D(0), \hat{H}(0)]=0$, then the situation would be trivial, since $\hat H(t)$ would be time-independent, as well. The problem becomes non-trivial and physically interesting exactly because of the non-commutativity and due to the fact that one has to use the Baker-Campbell-Hausdorff formula. Nevertheless, in this case, we managed to reduce the initial problem of solving differential equations to the problem of solving algebraic matrix equations. 

A comprehensive exploration of the conditions leading to trajectories \eqref{lambdaconst} presents an intriguing task, one that remains far from being exhaustively addressed. Nonetheless, a distinctive case stands out, in which these trajectories emerge directly from the structure of the AB-decomposition, enveloping the entirety of the phase space. One general way to achieve this is 
by choosing the $\mathfrak b$ to be a sub-algebra of $\mathfrak g$. In such a case, equation \eqref{lambdaconst} follows from Eq. \eqref{systembrachistochrone} and the fact that $i [\hat B_i, \hat B_j]\in \mathbb B,$ which is 
orthogonal to $\hat H\in \mathbb A$ with respect to trace inner product. This class of brachistochrone equations, along with Eq. \eqref{simpmultiplyunitary} have been first found in \cite{yang2022minimumtime}.

We conclude the recap of the exactly solvable 
case of constant Lagrange multipliers 
with a discussion of a special case of the AB decomposition having a pseudo-Cartan form. 
In such a case, the algebraic equations 
arising from the boundary value problem admit 
for some further simplifications.
Consider a Lie algebra $\mathfrak{g}.$ Then a decomposition
$\mathfrak{g}=\mathfrak{l}+\mathfrak{p}$ is pseudo-Cartan 
if 
\begin{equation}
\label{cartancomm}
    [\mathfrak{l}, \mathfrak{l}] \subseteq \mathfrak{l}, \quad [\mathfrak{p}, \mathfrak{p}] \subseteq \mathfrak{l}, \quad [\mathfrak{p}, \mathfrak{l}] \subseteq \mathfrak{p}.
\end{equation}
It is easy to confirm that if in the decomposition \eqref{abdef} one chooses $\mathfrak a =\mathfrak{p}$ and $\mathfrak b =\mathfrak{l}$  then a brachistochrone protocol is generated with time-independent Lagrange multipliers. Moreover, when 
$\mathfrak l=\mathfrak{su}(N-1),$ its orthogonal compliment $\mathfrak p$ 
is the smallest generating set in $\mathfrak{su}(N),$
that is the smallest set of controlled Hamiltonians giving access to the computation on the whole unitary group. We note that to the best of our knowledge the pseudo-Cartan decomposition was first employed in the context of optimal quantum control in \cite{PhysRevA.78.032327}. 

For AB decompositions having a pseudo-Cartan form it is possible to 
get some further insights into the structure of 
equation~\eqref{simpmultiplyunitary}.

\begin{prop}
Let a brachistochrone problem with an AB decomposition a pseudo-Cartan decomposition. Given two different $\hat U_d, \hat U'_d$, where $\hat U'_d= \exp(i \hat X) \hat U_d \exp(- i \hat X)$ and $\hat X \in \mathbb B$. The two unitary operators have the same cost function.
\end{prop}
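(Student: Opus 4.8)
The plan is to exploit the explicit algebraic solution of the pseudo-Cartan case, Eq.~\eqref{simpmultiplyunitary}, together with the observation that for such a decomposition conjugation by the subgroup generated by $\mathbb{B}$ preserves both summands of the AB decomposition. Since $\|\hat H(t)\|$ is conserved along any trajectory, the cost \eqref{varS} reduces to $S=\|\hat H(0)\|$, so it suffices to produce a brachistochrone solution for $\hat U'_d$ whose initial Hamiltonian has the same Frobenius norm as that of a solution for $\hat U_d$.

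First I would write the boundary value relation \eqref{simpmultiplyunitary} at $t=1$ for $\hat U_d$, namely $\hat U_d=\exp(i\hat D)\exp(-i(\hat H(0)+\hat D))$, where $\hat H(0)\in\mathbb A$ and $\hat D\in\mathbb B$ constitute a solution. Setting $g=\exp(i\hat X)$ with $\hat X\in\mathbb B$, I would conjugate this identity by $g$ and insert $g^{-1}g$ between the two exponentials,
\begin{equation*}
\hat U'_d=g\hat U_d g^{-1}=\exp\!\big(i\,g\hat D g^{-1}\big)\,\exp\!\big(-i\,g(\hat H(0)+\hat D)g^{-1}\big).
\end{equation*}

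The key step is to show that $\mathrm{Ad}_g$ maps $\mathbb A$ into $\mathbb A$ and $\mathbb B$ into $\mathbb B$. This is the group-level form of the pseudo-Cartan relations \eqref{cartancomm}: writing $g=\exp(i\hat X)$ with $i\hat X\in\mathfrak l$ (since $-i\hat X\in\mathfrak b=\mathfrak l$ and $\mathfrak l$ is a vector space), one has $\mathrm{Ad}_g=\exp(\mathrm{ad}_{i\hat X})$, and because $[\mathfrak l,\mathfrak l]\subseteq\mathfrak l$ and $[\mathfrak l,\mathfrak p]\subseteq\mathfrak p$ the operator $\mathrm{ad}_{i\hat X}$ preserves $\mathfrak l$ and $\mathfrak p$ separately; hence so does its exponential $\mathrm{Ad}_g$, and correspondingly $\mathrm{Ad}_g$ preserves $\mathbb B$ and $\mathbb A$. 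Thus $\hat D':=g\hat D g^{-1}\in\mathbb B$ and $\hat H'(0):=g\hat H(0)g^{-1}\in\mathbb A$, and the displayed identity reads $\hat U'_d=\exp(i\hat D')\exp(-i(\hat H'(0)+\hat D'))$, which is precisely the solved form \eqref{simpmultiplyunitary} for $\hat U'_d$ with admissible initial data $(\hat H'(0),\hat D')$.

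Finally, since the Frobenius norm is invariant under unitary conjugation, $\|\hat H'(0)\|=\|g\hat H(0)g^{-1}\|=\|\hat H(0)\|$, so the two protocols carry identical cost. Because conjugation by $g$ is invertible, with inverse given by conjugation by $g^{-1}$ (which again preserves $\mathbb A$ and $\mathbb B$ by the same argument), it furnishes a cost-preserving bijection between the solution sets of the two boundary value problems; in particular the minimal costs of $\hat U_d$ and $\hat U'_d$ coincide. I expect the only genuine obstacle to be the third step, promoting the infinitesimal relations \eqref{cartancomm} to the $\mathrm{Ad}_g$-invariance of $\mathbb A$ and $\mathbb B$, while the remaining manipulations are routine.
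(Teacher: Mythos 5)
Your proof is correct and follows essentially the same route as the paper's: conjugate the solved boundary relation \eqref{simpmultiplyunitary} at $t=1$ by $\exp(i\hat X)$, use the pseudo-Cartan relations \eqref{cartancomm} to conclude that the conjugated data $\hat H'_0=e^{i\hat X}\hat H_0 e^{-i\hat X}\in\mathbb A$ and $\hat D'_0=e^{i\hat X}\hat D_0 e^{-i\hat X}\in\mathbb B$ solve the same equation for $\hat U'_d$, and invoke unitary invariance of the Frobenius norm. You additionally make explicit two points the paper glosses over --- that $\mathrm{Ad}_{\exp(i\hat X)}=\exp(\mathrm{ad}_{i\hat X})$ preserves each summand of the decomposition, and that conjugation furnishes a cost-preserving bijection between the (generally multivalued) solution sets, so the \emph{minimal} costs coincide --- which only strengthens the argument.
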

\begin{proof}
    Take Eq. \eqref{simpmultiplyunitary} at $t=1$ for  $\hat U_d$. We assume that its solution is $\hat H_0, \hat D_0$. By multiplying both sides with $\exp(i \hat X),\exp(-i \hat X) $ from left and right, respectively, we get $\hat U'_d$ in the lhs. By considering the defining properties \eqref{cartancomm} we know that
    \begin{equation}
        \hat H'_0=e^{i\hat X} \hat H_0 e^{-i\hat X} ,\quad \hat D'_0=e^{i\hat X} \hat D_0 e^{-i\hat X} 
    \end{equation}
    belong in $\mathbb A, \mathbb B$, respectively. This means that this is the solution for the $\hat U'_d$. Therefore the cost function, i.e. the Frobenius norm of $\hat H_0'$ is the same with $\hat H_0$. 
\end{proof}

\subsection{Liouville-integrable brachistochrone equations}

In the previous section, the integrability of the protocol stemmed directly from the conservation of the Lagrange multipliers. Here we introduce a more general AB decomposition that does not require conservation of $\lambda_i$ and thus generates a wider class integrable brachistochrone protocols. The construction will exploit the Lax representation of the equations of motion for 
the generalised Euler-Arnold top \cite{Semenov1981}. 
To streamline discussion, we will not distinguish between $\mathfrak{su}(n)$ and its defining matrix representation. Furthermore,
will not distinguish between $\mathfrak{su}(n)$ and its dual space, making use of the isomorphism between the two due to the trace-form inner product.

We begin by introducing the time-dependent element 
\mbox{$\hat t = -i(\hat H+\hat D),$} and 
writing its expansion in a given orthonormal 
basis $\mathfrak{su}(n),$ as 
\begin{equation}
\hat t = \sum_{j} x_j (t) \hat e_j
\end{equation}
Here $x_i(t)$ is either a dynamical variable or a Lagrange multiplier, 
depending on which subspace, $\mathbb A$ or $\mathbb B,$ the corresponding $\hat e_j$ lies in. The brachistochrone equation 
can then be written in the form 
\begin{equation}
\label{BEProj}
\frac{d}{dt} \hat t = [\hat t, \mathcal P_B \hat t].
\end{equation}
In this equation $\mathcal P_B \in \mathrm{End}(\mathfrak g)$ is the projector onto the B subspace, which implies 
$
\mathcal P_B \hat t = - i \hat D.
$
Our goal is to demonstrate that given an appropriate choice of the projector $\mathcal P_B,$ 
equation \eqref{BEProj} can be viewed as a certain 
limit of a known completely 
integrable dynamical system. To this end we recall the general construction of the Lax 
representation of the Euler-Arnold top.

Let $\mathfrak g= \mathfrak{l}+\mathfrak{p}$ be a pseudo-Cartan decomposition of $\mathfrak{su}(n)$ \cite{helgason1979differential}. We can then 
write the element $\hat t$ in the form $\hat t  = \hat l + \hat s$ 
where $l$ and $s$ lie in $\mathfrak l$, $\mathfrak p$, respectively
\begin{equation}
    \hat l=\sum_{ \mathfrak l} x_i(t) \hat e_i , \quad \hat s=\sum_{ \mathfrak p} x_i(t) \hat e_i
\end{equation}
Now, let us fix some element $\hat a\in \mathfrak p$ and introduce the following Lax matrix
\begin{equation}
\label{lax2}
    \hat L=\hat a \lambda+ \hat l+ \frac{\hat s}{\lambda}, \quad \lambda \in \mathbb C.
\end{equation}
where $\lambda  \in \mathbb C$ is the spectral 
parameter. This matrix will be used to construct the 
brachistrochrone equation in the Lax form.

In order to obtain the second matrix in the Lax pair, 
we introduce a scalar function 
\begin{equation}
\label{phidef}
 \phi(\hat x ) = \mathrm{Tr} \, \varphi(\hat x)
\end{equation}
where 
\begin{equation}
\label{varphidef}
    \varphi(z)= \sum_{k=0}^K c_k z^k.
\end{equation} 
is a degree $K$ polynomial. 
Note that according to Eq.~\eqref{lax1}, 
$\phi(\hat t)$ is a constant of motion under the
brachistochrone evolution.
For a given set of coefficients $\{c_k\}$, we define 
the element 
\begin{equation}
\label{bdef}
\hat b = \left. \nabla \phi( \hat x)\right 
\vert_{\hat x = \hat a} =\varphi'(\hat a)
\end{equation}
which is the gradient of the function $\phi$ 
at $\hat x= \hat a.$  One can easily see that 
$[\hat a, \hat b]=0$ holds for any choice 
of $\hat a$ and $\{c_k\}.$

Within the subgroup of $\mathfrak l$ we perform a further decomposition $\mathfrak l=\mathfrak l_a+\mathfrak l^\perp$, where $\mathfrak l_a$ is the centraliser of $\hat a$
\begin{equation}
    \mathfrak l_a: \{\hat x\in \mathfrak l,\quad [\hat x, \hat a] =0\},
\end{equation}
which is a subalgebra of $\mathfrak l.$

The structure of the subalgebra $\mathfrak l_a$ depends on the order of $\mathfrak{su}(n),$ the choice of the pseudo-Cartan decomposition and the choice of the element $\hat a$. 
% Generally, it will have the form 
% \begin{equation}
% \mathfrak l_a = \sum_{\xi} \mathfrak l_a^{(\xi)}  
% \end{equation}
% where $\mathfrak l_a^{(\xi)}$ are simple subalgebras of $\mathfrak l.$
We now define the function \mbox{$\phi_a : \mathfrak l_a \to \mathbb R, $ }
$$\phi_a(\hat x)=\phi(\hat a+ \hat x),\quad \hat x \in \mathfrak l_a$$ and introduce its Hessian 
tensor at the point $l=0$:
\begin{equation}
    \phi_a'':=\sum_{i,j\in \mathfrak l_a} \hat e_i \otimes \hat e_j \frac{\partial^2 \phi_a}{\partial x_i \partial x_j}\Big|_{\hat x=0}
\end{equation}
With the help of the Hessian, we define a linear map $\hat \omega$
from $\hat l $ to itself by
\begin{equation}
    \hat \omega(\hat l) =
    \begin{cases}
      \phi''_a \hat l & \text{if $\hat l \in \mathfrak{l}_a$} \\
      \mathrm{ad}_b (\mathrm{ad}_a)^{-1} \hat l & \text{if $\hat l\in \mathfrak{l}^{\perp}$}
    \end{cases}.
    \label{omegadef}
\end{equation}

With these ingredients we complete the Lax pair with the 
matrix $\hat M:$
\begin{equation}
\label{laxpair2}
    \hat M= \hat b \lambda+\hat \omega(\hat l), \quad \forall \lambda \in \mathbb C.
\end{equation}
The following Lax equation 
    \begin{equation}
    \label{LaxEq}
        \frac{d\hat L}{dt}=[\hat{L},\hat M]
    \end{equation}
describes a class of completely integrable 
systems known as generalised Euler-Arnold tops~\cite{Semenov1981}.

We are ready to provide the following Theorem.

\begin{theorem}
    For a given element $\hat p \in \mathfrak p$, define $\hat a=\epsilon \hat p$, where $\epsilon\in \mathbb R$. 
    Let 
    \begin{equation}
    \mathfrak l_a=\mathfrak l_a^{(A)}+\mathfrak l_a^{(B)}
    \end{equation} 
    be some, possibly trivial, decomposition of the centraliser of $\hat a$ into a direct sum of semi-simple subalgebras of $\mathfrak g.$
    Then with an appropriate choice of the polynomial $\varphi(z),$
    Eq.~\eqref{varphidef},
    the $\varepsilon \to 0$ limit of the Lax equation \eqref{LaxEq} 
    coincides with the brachistochrone equation \eqref{BEProj} where 
    the AB decomposition takes the form 
    \begin{equation}
    \label{pla}
        \mathfrak a =\mathfrak p + \mathfrak l_a^{(A)}, \quad \mathfrak b= \mathfrak l^{\perp}+\mathfrak l_a^{(B)}
    \end{equation}
\end{theorem}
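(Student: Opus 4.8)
The plan is to expand the Lax equation \eqref{LaxEq} in powers of the spectral parameter $\lambda$ and to show that, after a suitable choice of $\varphi$, the limit $\varepsilon\to 0$ collapses the resulting hierarchy onto \eqref{BEProj}. Inserting $\hat L=\hat a\lambda+\hat l+\hat s/\lambda$ and $\hat M=\hat b\lambda+\hat\omega(\hat l)$, the commutator $[\hat L,\hat M]$ splits into coefficients of $\lambda^2,\lambda^1,\lambda^0,\lambda^{-1}$. The $\lambda^2$ coefficient is $[\hat a,\hat b]$, which vanishes by the remark following \eqref{bdef}. The $\lambda^1$ coefficient, $[\hat a,\hat\omega(\hat l)]+[\hat l,\hat b]$, must vanish identically, and I would check that this is exactly the raison d'\^etre of the definition \eqref{omegadef}: on $\mathfrak l_a$ both terms vanish, since $\phi''_a$ maps $\mathfrak l_a$ to itself (so its image commutes with $\hat a$) and $\hat b=\varphi'(\hat a)$ commutes with the whole centraliser; on $\mathfrak l^\perp$, writing $\hat\omega(\hat l)=[\hat b,\hat y]$ with $[\hat a,\hat y]=\hat l$, the Jacobi identity together with $[\hat a,\hat b]=0$ gives $[\hat a,[\hat b,\hat y]]=[\hat b,\hat l]=-[\hat l,\hat b]$. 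Matching the remaining $\lambda^0$ and $\lambda^{-1}$ coefficients against $d\hat L/dt=\dot{\hat l}+\dot{\hat s}/\lambda$ (here $\hat a$ is constant) yields the system $\dot{\hat l}=[\hat l,\hat\omega(\hat l)]+[\hat s,\hat b]$ and $\dot{\hat s}=[\hat s,\hat\omega(\hat l)]$.

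The core of the proof is to identify $\hat\omega$ with the orthogonal projector $\mathcal P_B$ restricted to $\mathfrak l$. I would first note that $\hat\omega$ is diagonal in the block decomposition of $\mathfrak l$ induced by the eigenspaces $V_\mu$ of $\hat p$: on the off-diagonal block $\mathrm{Hom}(V_\nu,V_\mu)\cap\mathfrak l^\perp$ it acts by the divided difference $\frac{\varphi'(\varepsilon\mu)-\varphi'(\varepsilon\nu)}{\varepsilon(\mu-\nu)}$, while on the diagonal blocks $\mathfrak l_a$ it acts by $\varphi''(\varepsilon\mu)$, the latter following from $\phi''_a(\hat x,\hat x)=\mathrm{Tr}\bigl(\varphi''(\hat a)\hat x^2\bigr)$ for $\hat x\in\mathfrak l_a$, which is a clean scalar multiple of the identity on each simple ideal because that ideal sits in a single eigenblock. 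I would then fix $\varphi$ by Hermite interpolation at the nodes $\{\varepsilon\mu\}$: impose $\varphi'(\varepsilon\mu)=\varepsilon\mu$ at every node, which forces all off-diagonal divided differences to equal $1$ and simultaneously makes $\hat b=\varphi'(\hat a)=\varepsilon\hat p$; and impose $\varphi''(\varepsilon\mu)=0$ on the eigenblocks forming $\mathfrak l_a^{(A)}$ and $\varphi''(\varepsilon\mu)=1$ on those forming $\mathfrak l_a^{(B)}$. Since distinct ideals occupy disjoint eigenblocks these prescriptions never conflict, and a polynomial of degree $2r-1$, with $r$ the number of distinct eigenvalues of $\hat p$, realises them exactly. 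With this choice $\hat\omega$ is precisely the projection onto $\mathfrak l^\perp+\mathfrak l_a^{(B)}=\mathfrak b\cap\mathfrak l$, i.e.\ $\hat\omega(\hat l)=\mathcal P_B\hat l$.

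Substituting back and using $\hat b=\varepsilon\hat p$, so that $[\hat s,\hat b]=\varepsilon[\hat s,\hat p]$, the system becomes $\dot{\hat l}=[\hat l,\mathcal P_B\hat l]+\varepsilon[\hat s,\hat p]$ and $\dot{\hat s}=[\hat s,\mathcal P_B\hat l]$. Sending $\varepsilon\to 0$ removes the single residual term. Since $\hat s\in\mathfrak p\subseteq\mathfrak a$ one has $\mathcal P_B\hat t=\mathcal P_B\hat l$, and a direct splitting of \eqref{BEProj} into its $\mathfrak l$ and $\mathfrak p$ components shows that these two limiting equations are exactly $\dot{\hat l}=[\hat l,\mathcal P_B\hat l]$ and $\dot{\hat s}=[\hat s,\mathcal P_B\hat l]$. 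Hence the $\varepsilon\to 0$ limit reassembles into $\dot{\hat t}=[\hat t,\mathcal P_B\hat t]$ with $\mathfrak a=\mathfrak p+\mathfrak l_a^{(A)}$ and $\mathfrak b=\mathfrak l^\perp+\mathfrak l_a^{(B)}$, as claimed.

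I expect the main obstacle to be the honest identification of $\hat\omega$ with a genuine orthogonal projector. Two points demand care. First, that the Hessian $\phi''_a$ is a scalar on each simple ideal: this is where the semisimplicity hypothesis enters, guaranteeing that the trace form is a nonzero multiple of the Killing form on each ideal and that no ideal straddles eigenblocks carrying incompatible required values. Second, the central (Cartan) directions of $\mathfrak l_a$ that commute with $\hat p$ are not covered by the semisimple summands and must either be assigned consistently to $\mathbb A$ or shown to decouple. I would also verify that the $\varepsilon$-dependence hidden in $\varphi$ introduces no spurious corrections: because the off-diagonal action of $\hat\omega$ is the \emph{exact} divided difference rather than a Taylor truncation, it equals $1$ identically once $\varphi'(\varepsilon\mu)=\varepsilon\mu$ at the nodes, so the only surviving $\varepsilon$-dependence is the manifestly vanishing term $\varepsilon[\hat s,\hat p]$.
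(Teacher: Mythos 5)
Your proof is correct and follows essentially the same route as the paper's: expand the Lax equation in powers of $\lambda$, constrain $\varphi$ so that $\hat b=\hat a=\epsilon\hat p$ (making $\hat\omega$ the identity on $\mathfrak l^{\perp}$) while $\varphi''$ takes the value $0$ on the eigenblocks forming $\mathfrak l_a^{(A)}$ and $1$ on those forming $\mathfrak l_a^{(B)}$ (making $\hat\omega=\mathcal P_B$), and then send $\varepsilon\to 0$ to eliminate the residual term $\varepsilon[\hat s,\hat p]$. The only difference is presentational: where you assert existence of such a $\varphi$ via Hermite interpolation at the eigenvalues, the paper exhibits the interpolating polynomial explicitly, Eq.~\eqref{choice}, and your divided-difference computation and Jacobi-identity check simply spell out steps the paper declares to hold automatically.
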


\begin{proof}
    Let $\mathrm{Spec}(\hat a)=\{ a_1, a_2, \dots \},$ be the eigenvalue spectrum of $\hat a$ in the defining 
    representation of $\mathfrak g= \mathfrak{su}(n)$ on the complex vector space $v=\mathbb C^n.$ To each eigenvalue 
    $a_i$ there corresponds an eigenspace $v_i\in v$ and a simple
    subalgebra $\mathfrak l_a^{(i)} \subseteq \mathfrak l_a,$ which is the largest subalgebra of $\mathfrak l_a$ preserving $v_i$ and acting 
    trivially on its orthogonal complement $v_i^\perp.$ The 
    centraliser of $\hat a$ has the semi-simple decomposition 
    \begin{equation}
    \mathfrak l_a = \mathfrak l_a^{(1)}+\dots+\mathfrak l_a^{(Q)}
    \end{equation}
    where $Q\le n.$ Without loss of generality, we may assume a numbering such that 
    \begin{equation}
    \mathfrak l_a^{(A)} =
    \mathfrak l_a^{(1)}+\dots + \mathfrak l_a^{(q)}, 
    \qquad
    \mathfrak l_a^{(B)} = \mathfrak l_a^{(q+1)}+\dots + \mathfrak l_a^{(Q)}
    \end{equation}

    Consider now the following choice of the polynomial $\varphi:$ 
    \begin{equation}
    \label{choice}
    \varphi(z) =\frac{z^2}{2}+\frac{1}{2}\psi(z) \prod_{i=1}^Q (z-a_i)^2,
    \end{equation}
    where 
    \begin{equation}
    \psi(z) = -\sum_{k=q+1}^Q
    \prod_{\substack{s=1\\s\neq k} }^Q 
    \frac{z-a_s}{(a_k-a_s)^3}.
    \end{equation}
    One can easily see that with this choice of $\varphi,$ the eigenvalues of $\hat b$ coincide with the eigenvalues of $\hat a$ therefore  $\hat b = \hat a.$ It follows immediately that the 
    restriction of $\hat\omega(\hat l),$ as defined in Eq.~\eqref{omegadef}, to the space $\mathfrak l^\perp$ acts as the identity.

    % The eigenvalue spectrum of the 
    % restriction of $\hat\omega(\hat l),$ Eq.~\eqref{omegadef}, on the space 
    % $\mathfrak l_\perp$ is given by
    % \begin{equation}
    % \label{specab}
    % \mathrm{Spec}\left[\mathrm{ad}_b (\mathrm{ad}_a^{-1})\right]
    % =\bigcup_{i<j}^Q \left\{\frac{\varphi'(a_i)-\varphi'(a_j)}{a_i-a_j}\right\}.
    % \end{equation}
    For the restriction of $\hat\omega(\hat l)$
    to the subspace $\mathfrak l_a$ we have 
    \begin{equation}
    \phi_a''= \sum_{i=1}^Q \varphi''(a_i) \mathcal P^{(i)}
    \end{equation}
    where $\mathcal P^{(i)}\in \mathrm{End}(\mathfrak g)$ is the orthogonal projector onto the subspace $\mathfrak l_a^{(i)}.$
    % Consider now the following choice of the polynomial $\varphi:$ 
    % \begin{equation}
    % \label{choice}
    % \varphi(z) =\frac{z^2}{2}+\frac{1}{2}\psi(z) \prod_{i=1}^Q (z-a_i)^2,
    % \end{equation}
    % where 
    % \begin{equation}
    % \psi(z) = -\sum_{k=q+1}^Q
    % \prod_{\substack{s=1\\s\neq k} }^Q 
    % \frac{z-a_s}{(a_k-a_s)^3}.
    % \end{equation}
    % We note that with this choice of $\varphi,$ the eigenvalues 
    % of $\hat b$ coincide with the eigenvalues of $\hat a$ therefore 
    % $\hat b = \hat a.$
    A straightforward calculation shows that for $\varphi$ given in Eq.~\eqref{choice} one has $\varphi''(a_i) = 0$
    for $i=1, \dots, q$ and $\varphi''(a_i)=1$ for $q<i\le Q.$ 
    Therefore, with $\varphi(z)$ given by Eq.~\eqref{choice} we 
    have
    \begin{equation}
    \label{omegaproj}
    \hat \omega = \mathcal P^\perp + \mathcal P^{(B)}=\mathcal P_B
    \end{equation}
    where $\mathcal P^{(B)}=\mathcal P^{(q+1)}+ \dots 
    +\mathcal P^{(Q)}.$
    
    Finally, substituting the Lax matrices into the Lax equation and \eqref{LaxEq} and gathering coefficients at 
    different powers of $\lambda$, we obtain the following set 
    of equations
    \begin{gather}
    [\hat a,\hat b]=0, \quad [\hat a,\hat \omega (\hat l)]=[\hat b,\hat l], \nonumber \\
    \dot {\hat l} = [\hat l,\hat \omega(\hat  l)]+[\hat s,\hat b], \quad \dot {\hat s} =[\hat s,\hat \omega(\hat l)].
    \end{gather}
    The first pair of equations is satisfied automatically thanks 
    to the definitions \eqref{bdef}, \eqref{omegadef}.
    In the second pair of equations we use the the definition of 
    $\hat t = \hat l+\hat s,$ the fact that $\hat b = \hat a= \epsilon \hat p,$ and the specific form of $\hat \omega,$ Eq.~\eqref{omegaproj} to obtain 
    \begin{equation}
    \frac{d}{dt} \hat t=[\hat t, \mathcal P_B \hat t ]+\epsilon [\hat s, \hat p].
    \end{equation}
    Taking the $\varepsilon \to 0 $ limit we finally obtain the brachistochrone equation in the form \eqref{BEProj}, 
    with the AB decomposition given by \eqref{pla}.
\end{proof}
% We note that the complete integrability of the equation \eqref{lax2} 
% in the Liouville sense has been established elsewhere, see e.g. \cite{Reyman1994GroupTheoreticalMI}. 

\subsection{Cases admitting for reduction to a linear system}

There are two special types of the decomposition \eqref{pla},
which admit for a straightforward reduction to a linear 
system. 

\textit{Type I.} In AB decompositions of this Type, the $\mathfrak b$ subspace is a subalgebra of $\mathfrak g.$ This is achieved by
choosing $\mathfrak l_a^{(A)}=0$ and $\mathfrak l_a^{(B)}=\mathfrak l_a$  in \eqref{pla}. This case has been analysed 
in literature~\cite{yang2022minimumtime} and we discuss it here 
in the section on time-independent Lagrange
multipliers.

\textit{Type II}. By choosing $\mathfrak l_a^{(A)}=\mathfrak l_a$ and $\mathfrak l_a^{(B)}=0$ one gets $\mathfrak b = \mathfrak l^\perp.$
Clearly, in this case the $\mathfrak b$ subspace is not a subalgebra of 
$\mathfrak g.$ However, the system 
\eqref{generalbrach} can still be reduced to a linear one.
\begin{prop}
\label{type2prop}
  Consider the AB decomposition such that   \mbox{$\mathfrak b = \mathfrak l^\perp.$} Then the dynamical variables, which are the projections of $\hat t$ onto the $\mathfrak l_a$ subspace are integrals of motion. Furthermore, the system of ODEs \eqref{generalbrach} reduces to a linear system with time-dependent coefficients.   
\end{prop}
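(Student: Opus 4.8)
The plan is to work directly with the projected brachistochrone equation~\eqref{BEProj}, specialised to the Type~II decomposition $\mathfrak{a}=\mathfrak{p}+\mathfrak{l}_a$, $\mathfrak{b}=\mathfrak{l}^{\perp}$. First I would split $\hat{t}$ into its three mutually orthogonal pieces, writing $\hat{t}=\hat{l}_a+\hat{l}^{\perp}+\hat{s}$ with $\hat{l}_a\in\mathfrak{l}_a$, $\hat{l}^{\perp}\in\mathfrak{l}^{\perp}$ and $\hat{s}\in\mathfrak{p}$. Since $\mathbb{B}$ corresponds to $\mathfrak{l}^{\perp}$, one has $\mathcal{P}_B\hat{t}=\hat{l}^{\perp}$, so that~\eqref{BEProj} reads $\frac{d}{dt}\hat{t}=[\hat{l}_a+\hat{l}^{\perp}+\hat{s},\hat{l}^{\perp}]=[\hat{l}_a,\hat{l}^{\perp}]+[\hat{s},\hat{l}^{\perp}]$, the self-commutator $[\hat{l}^{\perp},\hat{l}^{\perp}]$ vanishing.

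The crux is to locate the subspace of each surviving commutator. The pseudo-Cartan relation $[\mathfrak{p},\mathfrak{l}]\subseteq\mathfrak{p}$ from~\eqref{cartancomm} gives $[\hat{s},\hat{l}^{\perp}]\in\mathfrak{p}$ at once. The step I expect to be the main obstacle is showing $[\hat{l}_a,\hat{l}^{\perp}]\in\mathfrak{l}^{\perp}$, i.e.\ that this bracket carries no $\mathfrak{l}_a$-component, since the pseudo-Cartan relations by themselves only guarantee $[\mathfrak{l},\mathfrak{l}]\subseteq\mathfrak{l}$ without saying how the bracket splits. I would obtain the sharper inclusion from the invariance of the trace form: for any $\hat{x}\in\mathfrak{su}(n)$ the operator $\mathrm{ad}_{\hat{x}}$ is skew-symmetric with respect to the trace inner product, hence $\mathrm{ad}_{\hat{l}_a}$ maps $\mathfrak{l}$ into itself, preserves the subalgebra $\mathfrak{l}_a$, and therefore also preserves its orthogonal complement $\mathfrak{l}^{\perp}$ inside $\mathfrak{l}$. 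Projecting the equation of motion onto $\mathfrak{l}_a$ then leaves $\dot{\hat{l}}_a=0$, so the projection of $\hat{t}$ onto $\mathfrak{l}_a$ is an integral of motion, establishing the first assertion.

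With $\hat{l}_a$ frozen at its initial value, the remaining equations decouple hierarchically. Reading off the $\mathfrak{l}^{\perp}$- and $\mathfrak{p}$-components gives $\dot{\hat{l}}^{\perp}=[\hat{l}_a,\hat{l}^{\perp}]$ and $\dot{\hat{s}}=[\hat{s},\hat{l}^{\perp}]$; notably the first no longer involves $\hat{s}$. The $\hat{l}^{\perp}$ equation is linear with \emph{constant} coefficients and is solved by $\hat{l}^{\perp}(t)=\exp(t\,\mathrm{ad}_{\hat{l}_a})\,\hat{l}^{\perp}(0)$. Feeding this explicit $\hat{l}^{\perp}(t)$ into the $\hat{s}$ equation turns it into $\dot{\hat{s}}=-\mathrm{ad}_{\hat{l}^{\perp}(t)}\hat{s}$, a linear equation for $\hat{s}$ with time-dependent coefficients. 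This chain shows that~\eqref{generalbrach} reduces to a linear system with time-dependent coefficients, which is the second assertion. The only bilinear coupling in the original equation is the product of $\hat{s}$ with $\hat{l}^{\perp}$; the conservation of $\hat{l}_a$ lets us solve for $\hat{l}^{\perp}$ first, demoting that coupling to a known time-dependent coefficient and completing the linearisation. I note that the same reduced equations can alternatively be read off from the $\varepsilon\to0$ limit of the Lax-equation system in the proof of the Theorem, upon setting $\hat{\omega}=\mathcal{P}^{\perp}$ and $\hat{b}=\hat{a}\to0$.
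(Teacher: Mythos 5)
Your proof is correct and takes essentially the same route as the paper's: the same three-way split $\hat t=\hat l_a+\hat l^{\perp}+\hat s$, the same key step showing $[\hat l_a,\hat l^{\perp}]$ has no $\mathfrak l_a$-component (your skew-symmetry argument for $\mathrm{ad}_{\hat l_a}$ with respect to the trace form is precisely the paper's cyclicity-of-trace computation combined with closure of $\mathfrak l_a$), and the same hierarchical solution $\hat l^{\perp}(t)=e^{\hat l_a t}\hat l^{\perp}(0)e^{-\hat l_a t}$ feeding a linear, time-dependent equation for $\hat s$. No gaps to report.
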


\begin{proof}
    If $\mathfrak l_a^{(B)}=0,$ then 
    $\mathcal P_B \hat t = \hat l^\perp$ is the orthogonal projection 
    of $\hat t$ onto $\mathfrak l^\perp.$
    In this case, the orthogonal projection $\hat l_a$ of
    $\hat t=\hat s+\hat l$ onto the subspace $\mathfrak l_{a}$ is a constant of motion. To see that, consider the commutator 
    $[\hat s+\hat l,{\hat l}^{\perp}]$ on the right-hand side of \eqref{BEProj}. The only potentially non-trivial projection
    of this commutator on the $\mathfrak l_a$ subspace is 
    due to the element 
    $\hat x=[\hat l_a, \hat l^\perp].$ However, 
    this element lies in $\mathfrak l^\perp$ because for any 
    $\hat l_a'\in \mathfrak l_a$ one has \mbox{
    $\text{Tr}(\hat l_a' \hat x) =\text{Tr}(\hat l^\perp [\hat l_a', \hat l_a])=0,$} where we have used the cyclicity of trace and the 
    fact that $\mathfrak l_a$ is closed under the Lie bracket. 
    
    We further note that the system of equations for the dynamical variable $\hat l^\perp$ takes the form 
    \begin{equation}
    \frac{d}{dt} \hat l^\perp = [\hat l_a, \hat l^\perp].
    \end{equation}
    Since $\hat l_a$ are integrals of motion, this is a linear 
    system, which has an explicit solution in the form 
    \begin{equation}
    \hat l^\perp(t)= e^{\hat l_a t} \hat l^\perp(0) e^{-\hat l_a t}.
    \label{sperpsol}
    \end{equation}
%     gets decoupled from the rest. Along with the constant components of $\mathfrak l_a$, he vector $\vec r$, which has components of the different $\mathfrak l^{\perp}$ can be written explicitly 
%     \begin{equation}
%         \textbf r(t)=\exp(\hat{T}(t))\textbf r(0),\quad \hat{T}_{ab}=\sum_i f_{aib} \lambda_i t,
%     \end{equation}
% where $f$ is the structure constants and $\lambda_i$ contains all the (time-independent) components belonging in $\mathfrak l_a$.

The remaining equation for the $\hat s$ component of $\hat t$
takes the form 
\begin{equation}
\frac{d}{dt} \hat s = [\hat s, \hat l^\perp(t)],
\end{equation}
where $\hat l^\perp (t)$ is given by Eq.~\eqref{sperpsol}.
This is a linear equation with variable coefficients, which 
is formally solved in the form of a time-ordered exponential.
\end{proof}
Protocols derived based on Proposition \ref{type2prop} fall under the type II protocols.  An example of such a construction is presented in the Appendix for the  $\mathfrak{su}(3)$ algebra.

The AB decompositions we presented rely on the Cartan decomposition of $\mathfrak{su}(n)$ algebras. In the next section we present examples for $n=3,4$, but it can be extended for arbitrary $n.$ 
A question of practical importance, which is worth mentioning here, 
is how the integrable protocols 
could be realised with local operators in multi-qubit systems. This
brings up the issue of the construction of pseudo-Cartan decompositions 
of $\mathfrak{su}(2^n)$ with maximally local gates - a mathematical
problem, where few results are known today \cite{khaneja2001cartan}.

\section{Numerical stability}
\subsection{The stability of equations of motion}
We embarked on the search for integrable brachistochrone equations due to their inherent stability. A well-known issue in general non-linear problems is the instability of trajectories under slightly varied initial conditions. Specifically, if we solve equations with initial conditions $\textbf{x}'(0)=\textbf{x}(0)+\textbf{d}$ with $|\textbf{d}|\ll|\textbf{x}(0)|$ the respective variation for the solutions versus time $DH(t)=||\textbf{x}'(t)-\textbf{x}(t)||$. Stability can be quantified using the following measure 
\begin{equation}
    E(t)\equiv \frac{|DH(t)|}{|DH(0)|}.
\end{equation}
In Fig. 1 (a) we show how $E(t)$ of the brachistochrone equations \eqref{systembrachistochrone}  behave for the three different cases. The initial conditions have been chosen so that they realize a particular $\hat U_d$ for the integrable decompositions of Type I and Type II, and a general chaotic case, respectively. 
We observe that for a generic AB decomposition, even for SU(4) case, we observe Lyapunov exponents $\lambda_e \geq 1$ ie. $E^{ch}(t) \propto \exp(\lambda_e t)$.
where the "ch" superscript refers to the particular general chaotic decomposition. This indicates that the chaotic behaviour becomes relevant for the brachistochrone problem, since it kicks in for time $t<1$.

In order to exclude the role of exceptional perturbations, we sampled many different perturbations and we sampled $DH(t)$ over them.  Moreover, we tried numerous other AB decompositions. We confirm therefore the expected exponential divergence for a general AB decomposition.

\begin{figure}[h]
    \centering
    \includegraphics[width=8.66cm, height=5.2cm]{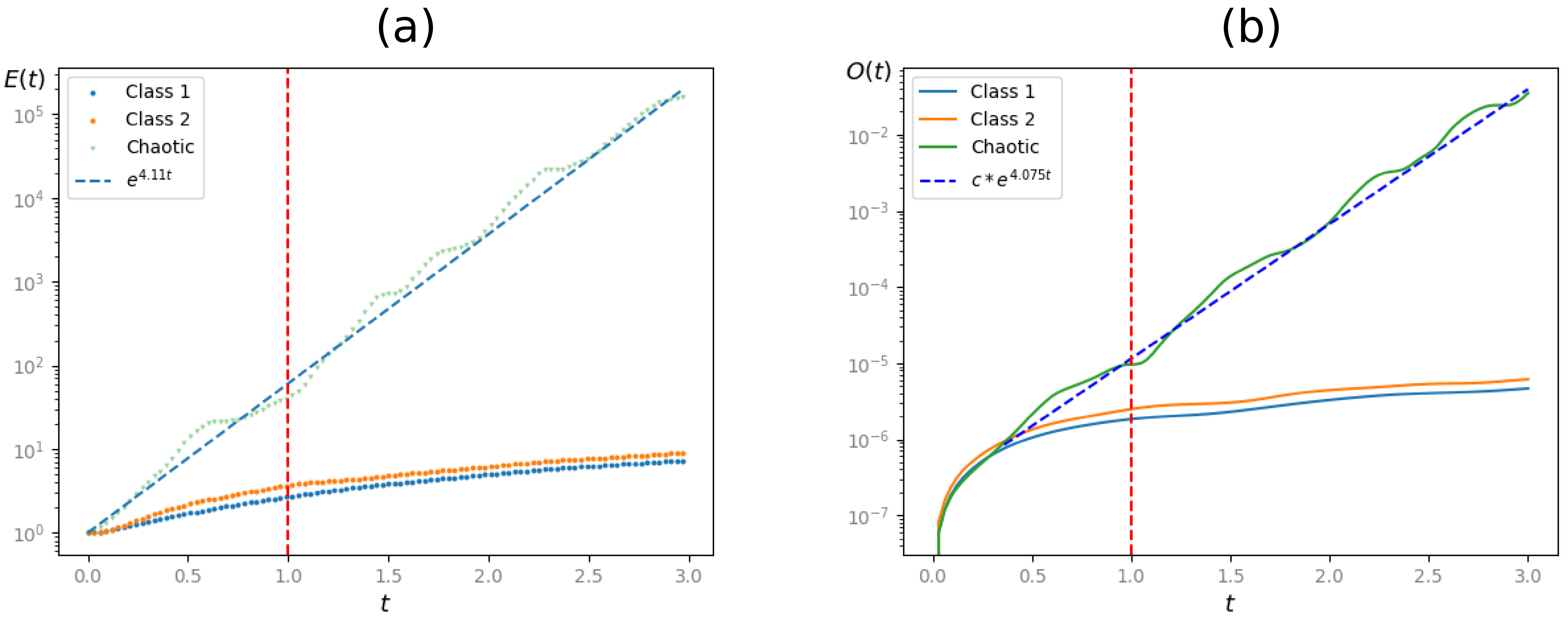}
    \caption{Stability of brachistochrone equations for $\mathfrak{su}(4)$. Panel (a): Given a particular $\hat U_d$ the respective $\textbf{x}(0)$ are found for the three protocols. The stability of those initial points are shown after averaging $E(t)$ for 2000 random small $\textbf{d}$. Panel (b): The time dependence of $O^i(t)$ for the same $\hat U_d$. The dashed lines fit the chaotic $E(t)$, providing estimations for the effective Lyapunov exponent.}
    \label{divt}
\end{figure}

It is instructive to see how the instability 
of the brachistochrone equations \eqref{systembrachistochrone} propagates to the generated unitary operator through \eqref{schrodinger}. 
To this end we look into the following measure 
of the divergence of to nearby trajectories
\begin{equation}
    O^i(t)=\langle||\text{Log}((\hat U_b^i)^{\dagger}(\textbf{x}_0,t) \hat U^i_b(\textbf{x}_0+\textbf{d},t))||_F\rangle_{\textbf{d}}.
\end{equation}
Here $U^i_b(\textbf{x}_0,t)$ is the generated unitary operator at time t for the initial conditions $\textbf{x}_0$ to the boundary value problem of \eqref{systembrachistochrone} for the i-th case (Integrable or Chaotic). By doing so, (see Panel (b) of Fig. 1), we can see when the non-linear behaviour appears for the chaotic case. For short enough times, the divergence of $\hat U_b(\textbf{x}_0+\textbf{d})$ from $\hat U_b(\textbf{x}_0)$ is linear for all three cases. For the chaotic one, we confirm this stops at time $t \approx \lambda_e ^{-1}$.  We confirm, therefore, that indeed the instability of the brachistochrone equation propagates to the generated unitary operator.

This comes as no surprise. If we assume
\begin{equation}
    \hat U^i_b(\textbf{x}_0+\textbf{d},t)=\hat U^i_b(\textbf{x}_0,t) \hat{u}(t),
\end{equation}
where the unitary operator $\hat u$ is the (conjugate transpose) of the argument in $\text{Log}$ of $O^i$. It obeys the Schr\"{o}dinger equation
\begin{equation}
    i \partial \hat u =  \big (\hat U^i_b(\textbf{x}_0,t) \big)^{\dagger} \hat \Delta\hat U^i_b(\textbf{x}_0,t) \hat u, \quad \hat u(0)=\mathbb{I},
\end{equation}
where $\hat \Delta=\hat{H}(t)'-\hat H(t)$ the difference of the two Hamiltonians for different initial conditions. This means that $\hat u$ is driven by the Hamiltonian (in the rotated frame) $\hat U^{\dagger}_b(\textbf{x}_0,t) \hat{\Delta}\hat U_b(\textbf{x}_0,t)$. Therefore one can explain why there is similar (albeit latent) Lyapunov exponent for the same equation.

Lyapunov instability is a fundamental manifestation of 
chaos in brachistochrone problems with non-integrable AB decompositions. However, a given AB decomposition 
is not generally characterised by a given 
largest Lyapunov exponent. Rather, the Lyapunov 
exponent is a function of the initial condition 
of the trajectory. To illustrate this point we 
investigate the statistics of the largest Lyapunov 
exponents in a chaotic system with a 
given AB decomposition. First, we notice that the brachistochrone equations \eqref{systembrachistochrone} are invariant under the transformation $\textbf{a}' \rightarrow \kappa \textbf{a}$, $\boldsymbol{\lambda}' \rightarrow \kappa \boldsymbol{\lambda}$, $t'\rightarrow {t}/{\kappa}$ where $\kappa \in \mathbb R^+$. This means that under this transformation the left hand side of
\begin{equation}
        \label{exponentialdiv}
         \frac{||\textbf{x}_1(t)-\textbf{x}_0(t)||}{||\textbf{x}_1(0)-\textbf{x}_0(0)||} \approx \exp( \gamma_0 t)
\end{equation}
remains invariant, while the right hand side becomes $\exp(\kappa \gamma_0 t)$. Therefore the transformed solution has a Lyapunov exponent $\kappa \gamma_0$. For this reason it is only meaningful to 
talk about the distribution of Lyapunov exponents 
on the basin of initial conditions defined by the 
constraint $||\textbf{x}_0||=1.$

 In Fig. \ref{Lyapunovprop} we show the distribution of Lyapunov exponents for random sampling of $\textbf{x}_0$'s with $||\textbf{x}_0||=1.$
For random sampling we assume the probability distribution invariant under the adjoint action of SU(n). 
\begin{figure}[h]
    \centering
    \includegraphics[width=8.06cm, height=7.3cm]{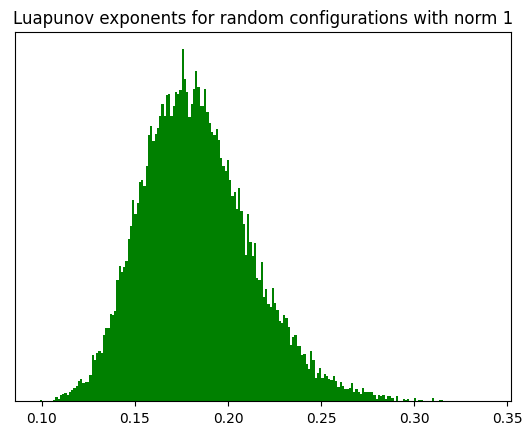}
    \caption{Properties of Lyapunov exponents for an SU(4) chaotic AB decomposition. The histogram of distribution of Lyapunov exponents for configurations $\textbf{x}_0$ with $||\textbf{x}_0||=1$. 25000 random configurations are sampled.} 
    \label{Lyapunovprop}
\end{figure}

We observed that for a chaotic protocol, close initial conditions diverge exponentially, while integrable protocols diverge polynomially. We expect, thus, between these two a qualitative difference regarding the closeness of two different unitary operators $\hat U_0, \hat U_1$ derived from close initial configurations $(\textbf{a}_0(0),\boldsymbol{\lambda}_0(0))$, $(\textbf{a}_1(0),\boldsymbol{\lambda}_1(0))$, respectively. Thus we introduce the measure:
\begin{equation}
    \label{measuref}
    \mathcal{F}^{i}(\hat U_0,\textbf{d})=||\text{Log}(( (\hat U_0)^\dagger \hat U^i_b(\textbf{x}_0+\textbf{d},1))||_F,
\end{equation}
where $\textbf{x}_0$ is the configuration that generates $\hat U_0$ through the brachistochrone equations for some particular protocol, and $\textbf{d}$ is a small perturbation. The index i refers to which protocol is used (an integrable or a chaotic one).
For a fixed $\textbf{x}_0$, we are going to constrain ourselves to $\textbf{d}$ of fixed (small) norm. Essentially $\textbf{x}_0$ determines the nature of the measure \eqref{measuref}. Thus, as we explained above, the norm of $\textbf{x}_0$ is the parameter that determines the closeness of the measure $\mathcal{F}^{i}$. When the norm of configuration $\textbf{x}_0$ is sufficiently small the statistical behavior of $log(\mathcal{F})$ between integrable and chaotic protocols for different $\textbf{d}$ (see Fig. \ref{divprop}(a)) will not be significant. On the other hand, when the norm of $\textbf{x}_0$ becomes bigger, there is an exponential separation and small deformations of configurations lead to bigger $\mathcal{F}$ (Fig. \ref{divprop}(b)). The difference in the behavior between the two cases is because for small Lyapunov exponents (small norm of $\textbf{x}_0$) the chaos doesn't play important role by the end of protocol (for times $t \approx 1$).

This observation hints at the impact of the increase 
in the rank of the unitary group and the number 
of constraints on the numerical hardness of 
computing the brachistochrone reaching an arbitrary 
unitary $\hat U_d$.
Generally, the higher the dimensionality of the Hilbert space, the greater the norm of the typical driving Hamiltonian. Furthermore, by introducing extra constraints one replaces "physical" degrees of freedom with Lagrange multipliers, which typically take bigger values. Both factors essentially increase the norm of the configurations $\textbf{x}_0$, which makes the brachistochrone path to a given $\hat U_d$ more unstable, as we established previously. We would like to stress, however, that no quantitative claim is made at this point and is an open question to be investigated.

\begin{figure}[h]
    \centering
    \includegraphics[width=8.66cm, height=5.7 cm]{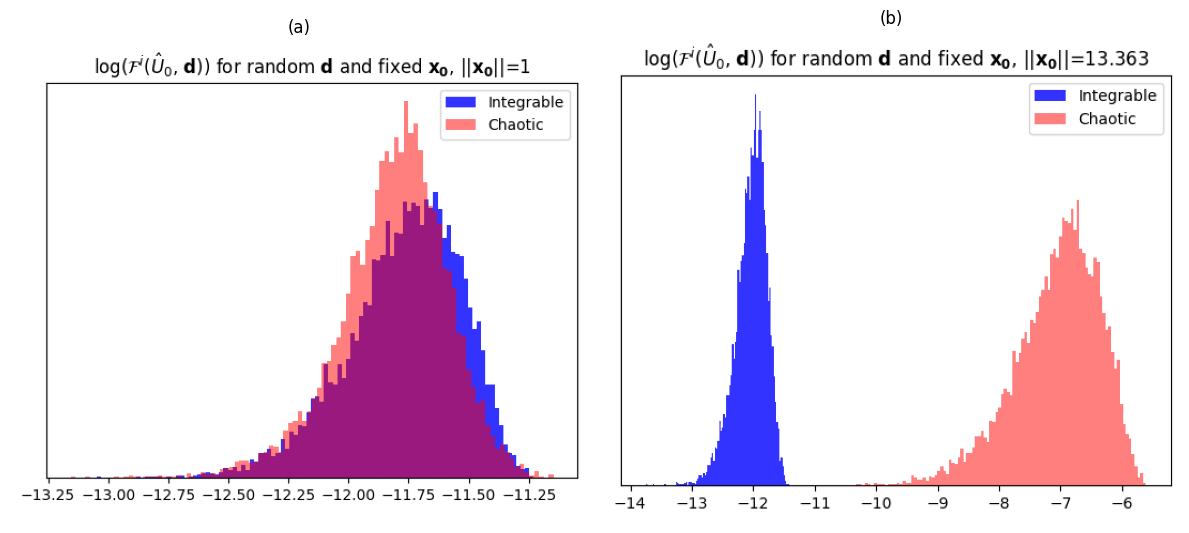}
    \caption{The statistics of $\log (\mathcal{F}^i(\hat U_0 ,\textbf{d}))$ for an integrable and a chaotic SU(4) AB decomposition and a given $\hat U_0$ (generated by $\textbf{x}_0$)  over random deformations $\textbf d$. Panel (a): A  configuration $\textbf{x}_0$ is used norm 1 is used. The behaviour of $\mathcal{F}$ between chaotic and integrable is qualitatively the same. Panel (b): A $\textbf{x}_0$ is used that solves the boundary value problem for some some random operator (i.e. not close enough to identity operator). We observe some exponential separation between the integrable and the chaotic one.} 
    \label{divprop}
\end{figure}

Instability with respect to small variations in the initial conditions makes the boundary value problem a numerically hard task. Essentially it turns out to be an optimization problem, where a high precision of the parameter landscape is required in order to converge to the desired unitary operator. To quantify this we introduce the cost function:
\begin{equation}
    C^{i}_{\hat{U}_{0}}(\textbf{x}):= || \text{Log} \big(\hat U_0^\dagger \hat U_{b,i}(\textbf{x},1) \big) ||_F,
\end{equation}

After fixing $\hat U_0$ and the AB decomposition, the minimization of $C$ requires some minimization routine. We employed certain pre-constructed ones for the  the chaotic and the integrable cases, respectively. Since  certain initial guesses may get stuck at some local minima, and not at the global one, we sampled over many initial points $\textbf{x}_0$ and study their statistics. From Fig. \ref{optimiedc} one can see that for the integrable case, there are many initial guesses that converge to the desired solution, with certain accuracy (of order of $10^{-6}$). On the other hand, in the chaotic case the optimization gets stuck to sub-optimal solutions and need extra resources to reach to the desired solution. So we deduce how immensively more difficult is to solve the boundary value problem for a generic decomposition, even for the SU(4) group, let alone higher ones.

\begin{figure}[h]
    \centering
    \includegraphics[width=8.66cm, height=6.7cm]{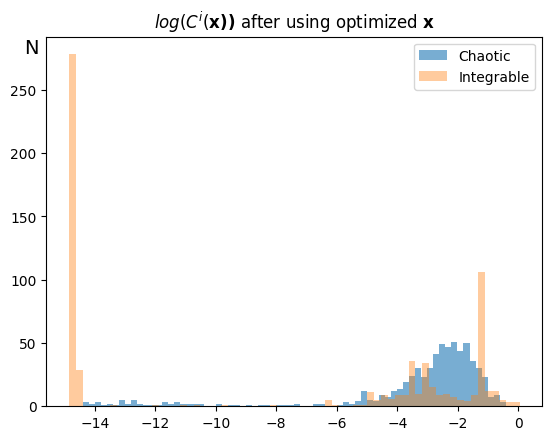}
    \caption{Histogram of the eventual values of $\log C ^i(\textbf{x})$ for an SU(4) chaotic and integrable AB decomposition. Some random initial values $\textbf{x}_0$ are used along with standard optimizing routines,  giving the final $\textbf{x}$.} 
    \label{optimiedc}
\end{figure}

`%The reason for that is similar to the Theorem \ref{theonorm}; in particular by looking the scaling \eqref{exponentialdiv} for a small $\textbf{d}$, for a fixed t the the divergence of the effective Hamiltonian is linear.

\section{Conclusions}
We have investigated quantum brachistochrone equations describing the optimal realization of an SU(n) gate 
with the help of a protocol utilising a constrained set of driving Hamiltonians. We have 
found that for a certain class of driving Hamiltonians, 
the quantum brachistochrone equation can be viewed as 
a limiting case of a completely integrable system 
known as the generalised Euler-Arnold 
top. We give an explicit prescription for the
associated AB decomposition of the algebra of physical observables into 
the space of driving Hamiltonians and its orthogonal complement. 

In order to demonstrate the utility of the 
completely integrable AB decompositions, we 
compare the numerical stability of generic
brachistochrone equations with the 
completely integrable ones. In contrast to the 
integrable case, generic brachistochrone 
equations are found 
to exhibit exponential divergence 
of nearby trajectories characteristic of chaotic 
behaviour. We quantify such divergence by the 
Lyapunov exponents and investigate the statistical 
distribution and scaling properties of such 
exponents for small groups. 
We propose arguments as to why such exponential 
divergence poses an increasing numerical challenge for the solution of the boundary value problem as one increases the size of the unitary group or the number of constraints. 
This should motivate further investigation of
completely integrable protocols and their small 
non-integrable deformations.

% We, therefore, claim that 

% the only meaningful brachistochrone protocols  for high dimensions are the integrable ones. A generic (chaotic) AB decomposition will be inherently unstable and the variational optimization methods are meant to fail (or to require high computational resources). On the other hand, the integrable ones, apart from solvable, they are stable. Therefore given enough computational resources, one is able to converge to the solution, at least.
Notwithstanding the intriguing link between integrable brachistochrone equations and the classical integrable 
models, many questions remain unanswered. 
For instance, it is unclear how far one can advance 
with the program of developing an explicit solution of the completely integrable brachistochrone equations, in particular with finding an explicit relationship between the initial conditions and the generated unitary operator at the end of the protocol. One systematic approach to 
this problem employs the Baker-Akhiezer functions and a factorisation method based on the matrix Riemann-Hilbert problem \cite{Semenov1981}. Furthermore, even if 
a formal algebraic solution to the boundary value 
problem is found, the actual computation of the initial 
conditions may still present a certain challenge as is 
evidenced by Eq. \eqref{multiplyunitary}.
It is also important to note that the completely integrable cases of AB decompositions identified here may not cover all integrable brachistochrone problems, making further investigation into this area an interesting and relevant problem in its own right

% Moreover, the classification of the different integrable protocols is another interesting question, ie identifying equivalence classes out of the infinite integrable cases one can generate based on the prescription we provided. 

{\it Acknowledgements} This publication is part of the project Adiabatic Protocols in Extended Quantum Systems, Project No 680-91-130, which is funded by the Dutch Research Council (NWO).

\nocite{*}

\bibliography{main.bib}% Produces the bibliography via BibTeX.

%apsrev4-2.bst 2019-01-14 (MD) hand-edited version of apsrev4-1.bst
%Control: key (0)
%Control: author (8) initials jnrlst
%Control: editor formatted (1) identically to author
%Control: production of article title (0) allowed
%Control: page (0) single
%Control: year (1) truncated
%Control: production of eprint (0) enabled
\begin{thebibliography}{39}%
\makeatletter
\providecommand \@ifxundefined [1]{%
 \@ifx{#1\undefined}
}%
\providecommand \@ifnum [1]{%
 \ifnum #1\expandafter \@firstoftwo
 \else \expandafter \@secondoftwo
 \fi
}%
\providecommand \@ifx [1]{%
 \ifx #1\expandafter \@firstoftwo
 \else \expandafter \@secondoftwo
 \fi
}%
\providecommand \natexlab [1]{#1}%
\providecommand \enquote  [1]{``#1''}%
\providecommand \bibnamefont  [1]{#1}%
\providecommand \bibfnamefont [1]{#1}%
\providecommand \citenamefont [1]{#1}%
\providecommand \href@noop [0]{\@secondoftwo}%
\providecommand \href [0]{\begingroup \@sanitize@url \@href}%
\providecommand \@href[1]{\@@startlink{#1}\@@href}%
\providecommand \@@href[1]{\endgroup#1\@@endlink}%
\providecommand \@sanitize@url [0]{\catcode `\\12\catcode `\$12\catcode `\&12\catcode `\#12\catcode `\^12\catcode `\_12\catcode `\%12\relax}%
\providecommand \@@startlink[1]{}%
\providecommand \@@endlink[0]{}%
\providecommand \url  [0]{\begingroup\@sanitize@url \@url }%
\providecommand \@url [1]{\endgroup\@href {#1}{\urlprefix }}%
\providecommand \urlprefix  [0]{URL }%
\providecommand \Eprint [0]{\href }%
\providecommand \doibase [0]{https://doi.org/}%
\providecommand \selectlanguage [0]{\@gobble}%
\providecommand \bibinfo  [0]{\@secondoftwo}%
\providecommand \bibfield  [0]{\@secondoftwo}%
\providecommand \translation [1]{[#1]}%
\providecommand \BibitemOpen [0]{}%
\providecommand \bibitemStop [0]{}%
\providecommand \bibitemNoStop [0]{.\EOS\space}%
\providecommand \EOS [0]{\spacefactor3000\relax}%
\providecommand \BibitemShut  [1]{\csname bibitem#1\endcsname}%
\let\auto@bib@innerbib\@empty
%</preamble>
\bibitem [{\citenamefont {Caneva}\ \emph {et~al.}(2009)\citenamefont {Caneva}, \citenamefont {Murphy}, \citenamefont {Calarco}, \citenamefont {Fazio}, \citenamefont {Montangero}, \citenamefont {Giovannetti},\ and\ \citenamefont {Santoro}}]{PhysRevLett.103.240501}%
  \BibitemOpen
  \bibfield  {author} {\bibinfo {author} {\bibfnamefont {T.}~\bibnamefont {Caneva}}, \bibinfo {author} {\bibfnamefont {M.}~\bibnamefont {Murphy}}, \bibinfo {author} {\bibfnamefont {T.}~\bibnamefont {Calarco}}, \bibinfo {author} {\bibfnamefont {R.}~\bibnamefont {Fazio}}, \bibinfo {author} {\bibfnamefont {S.}~\bibnamefont {Montangero}}, \bibinfo {author} {\bibfnamefont {V.}~\bibnamefont {Giovannetti}},\ and\ \bibinfo {author} {\bibfnamefont {G.~E.}\ \bibnamefont {Santoro}},\ }\bibfield  {title} {\bibinfo {title} {Optimal control at the quantum speed limit},\ }\href {https://doi.org/10.1103/PhysRevLett.103.240501} {\bibfield  {journal} {\bibinfo  {journal} {Phys. Rev. Lett.}\ }\textbf {\bibinfo {volume} {103}},\ \bibinfo {pages} {240501} (\bibinfo {year} {2009})}\BibitemShut {NoStop}%
\bibitem [{\citenamefont {Dou}\ \emph {et~al.}(2023)\citenamefont {Dou}, \citenamefont {Han},\ and\ \citenamefont {Shu}}]{PhysRevApplied.20.014031}%
  \BibitemOpen
  \bibfield  {author} {\bibinfo {author} {\bibfnamefont {F.-Q.}\ \bibnamefont {Dou}}, \bibinfo {author} {\bibfnamefont {M.-P.}\ \bibnamefont {Han}},\ and\ \bibinfo {author} {\bibfnamefont {C.-C.}\ \bibnamefont {Shu}},\ }\bibfield  {title} {\bibinfo {title} {Quantum speed limit under brachistochrone evolution},\ }\href {https://doi.org/10.1103/PhysRevApplied.20.014031} {\bibfield  {journal} {\bibinfo  {journal} {Phys. Rev. Appl.}\ }\textbf {\bibinfo {volume} {20}},\ \bibinfo {pages} {014031} (\bibinfo {year} {2023})}\BibitemShut {NoStop}%
\bibitem [{\citenamefont {Khaneja}\ \emph {et~al.}(2001)\citenamefont {Khaneja}, \citenamefont {Brockett},\ and\ \citenamefont {Glaser}}]{Khaneja_2001}%
  \BibitemOpen
  \bibfield  {author} {\bibinfo {author} {\bibfnamefont {N.}~\bibnamefont {Khaneja}}, \bibinfo {author} {\bibfnamefont {R.}~\bibnamefont {Brockett}},\ and\ \bibinfo {author} {\bibfnamefont {S.~J.}\ \bibnamefont {Glaser}},\ }\bibfield  {title} {\bibinfo {title} {Time optimal control in spin systems},\ }\bibfield  {journal} {\bibinfo  {journal} {Physical Review A}\ }\textbf {\bibinfo {volume} {63}},\ \href {https://doi.org/10.1103/physreva.63.032308} {10.1103/physreva.63.032308} (\bibinfo {year} {2001})\BibitemShut {NoStop}%
\bibitem [{\citenamefont {Dong}\ and\ \citenamefont {Petersen}(2010)}]{Dong_2010}%
  \BibitemOpen
  \bibfield  {author} {\bibinfo {author} {\bibfnamefont {D.}~\bibnamefont {Dong}}\ and\ \bibinfo {author} {\bibfnamefont {I.}~\bibnamefont {Petersen}},\ }\bibfield  {title} {\bibinfo {title} {Quantum control theory and applications: a survey},\ }\href {https://doi.org/10.1049/iet-cta.2009.0508} {\bibfield  {journal} {\bibinfo  {journal} {{IET} Control Theory {\&} Applications}\ }\textbf {\bibinfo {volume} {4}},\ \bibinfo {pages} {2651} (\bibinfo {year} {2010})}\BibitemShut {NoStop}%
\bibitem [{\citenamefont {Koch}\ \emph {et~al.}(2022)\citenamefont {Koch}, \citenamefont {Boscain}, \citenamefont {Calarco}, \citenamefont {Dirr}, \citenamefont {Filipp}, \citenamefont {Glaser}, \citenamefont {Kosloff}, \citenamefont {Montangero}, \citenamefont {Schulte-Herbrüggen}, \citenamefont {Sugny},\ and\ \citenamefont {Wilhelm}}]{Koch_2022}%
  \BibitemOpen
  \bibfield  {author} {\bibinfo {author} {\bibfnamefont {C.~P.}\ \bibnamefont {Koch}}, \bibinfo {author} {\bibfnamefont {U.}~\bibnamefont {Boscain}}, \bibinfo {author} {\bibfnamefont {T.}~\bibnamefont {Calarco}}, \bibinfo {author} {\bibfnamefont {G.}~\bibnamefont {Dirr}}, \bibinfo {author} {\bibfnamefont {S.}~\bibnamefont {Filipp}}, \bibinfo {author} {\bibfnamefont {S.~J.}\ \bibnamefont {Glaser}}, \bibinfo {author} {\bibfnamefont {R.}~\bibnamefont {Kosloff}}, \bibinfo {author} {\bibfnamefont {S.}~\bibnamefont {Montangero}}, \bibinfo {author} {\bibfnamefont {T.}~\bibnamefont {Schulte-Herbrüggen}}, \bibinfo {author} {\bibfnamefont {D.}~\bibnamefont {Sugny}},\ and\ \bibinfo {author} {\bibfnamefont {F.~K.}\ \bibnamefont {Wilhelm}},\ }\bibfield  {title} {\bibinfo {title} {Quantum optimal control in quantum technologies. strategic report on current status, visions and goals for research in europe},\ }\bibfield  {journal} {\bibinfo  {journal} {{EPJ} Quantum Technology}\ }\textbf {\bibinfo {volume} {9}},\ \href
  {https://doi.org/10.1140/epjqt/s40507-022-00138-x} {10.1140/epjqt/s40507-022-00138-x} (\bibinfo {year} {2022})\BibitemShut {NoStop}%
\bibitem [{\citenamefont {Nielsen}\ \emph {et~al.}(2006)\citenamefont {Nielsen}, \citenamefont {Dowling}, \citenamefont {Gu},\ and\ \citenamefont {Doherty}}]{nielsen2006}%
  \BibitemOpen
  \bibfield  {author} {\bibinfo {author} {\bibfnamefont {M.~A.}\ \bibnamefont {Nielsen}}, \bibinfo {author} {\bibfnamefont {M.~R.}\ \bibnamefont {Dowling}}, \bibinfo {author} {\bibfnamefont {M.}~\bibnamefont {Gu}},\ and\ \bibinfo {author} {\bibfnamefont {A.~C.}\ \bibnamefont {Doherty}},\ }\bibfield  {title} {\bibinfo {title} {Quantum computation as geometry},\ }\href {https://doi.org/10.1126/science.1121541} {\bibfield  {journal} {\bibinfo  {journal} {Science}\ }\textbf {\bibinfo {volume} {311}},\ \bibinfo {pages} {1133} (\bibinfo {year} {2006})},\ \Eprint {https://arxiv.org/abs/https://www.science.org/doi/pdf/10.1126/science.1121541} {https://www.science.org/doi/pdf/10.1126/science.1121541} \BibitemShut {NoStop}%
\bibitem [{\citenamefont {Nielsen}\ and\ \citenamefont {Chuang}(2010)}]{nielsen_chuang_2010}%
  \BibitemOpen
  \bibfield  {author} {\bibinfo {author} {\bibfnamefont {M.~A.}\ \bibnamefont {Nielsen}}\ and\ \bibinfo {author} {\bibfnamefont {I.~L.}\ \bibnamefont {Chuang}},\ }\href {https://doi.org/10.1017/CBO9780511976667} {\emph {\bibinfo {title} {Quantum Computation and Quantum Information: 10th Anniversary Edition}}}\ (\bibinfo  {publisher} {Cambridge University Press},\ \bibinfo {year} {2010})\BibitemShut {NoStop}%
\bibitem [{\citenamefont {Gu\'{e}ry-Odelin}\ \emph {et~al.}(2019)\citenamefont {Gu\'{e}ry-Odelin}, \citenamefont {Ruschhaupt}, \citenamefont {Kiely}, \citenamefont {Torrontegui}, \citenamefont {Mart{\'{\i}}nez-Garaot},\ and\ \citenamefont {Muga}}]{Gu_ry_Odelin_2019}%
  \BibitemOpen
  \bibfield  {author} {\bibinfo {author} {\bibfnamefont {D.}~\bibnamefont {Gu\'{e}ry-Odelin}}, \bibinfo {author} {\bibfnamefont {A.}~\bibnamefont {Ruschhaupt}}, \bibinfo {author} {\bibfnamefont {A.}~\bibnamefont {Kiely}}, \bibinfo {author} {\bibfnamefont {E.}~\bibnamefont {Torrontegui}}, \bibinfo {author} {\bibfnamefont {S.}~\bibnamefont {Mart{\'{\i}}nez-Garaot}},\ and\ \bibinfo {author} {\bibfnamefont {J.}~\bibnamefont {Muga}},\ }\bibfield  {title} {\bibinfo {title} {Shortcuts to adiabaticity: Concepts, methods, and applications},\ }\bibfield  {journal} {\bibinfo  {journal} {Reviews of Modern Physics}\ }\textbf {\bibinfo {volume} {91}},\ \href {https://doi.org/10.1103/revmodphys.91.045001} {10.1103/revmodphys.91.045001} (\bibinfo {year} {2019})\BibitemShut {NoStop}%
\bibitem [{\citenamefont {Girolami}(2019)}]{Girolami_2019}%
  \BibitemOpen
  \bibfield  {author} {\bibinfo {author} {\bibfnamefont {D.}~\bibnamefont {Girolami}},\ }\bibfield  {title} {\bibinfo {title} {How difficult is it to prepare a quantum state?},\ }\bibfield  {journal} {\bibinfo  {journal} {Physical Review Letters}\ }\textbf {\bibinfo {volume} {122}},\ \href {https://doi.org/10.1103/physrevlett.122.010505} {10.1103/physrevlett.122.010505} (\bibinfo {year} {2019})\BibitemShut {NoStop}%
\bibitem [{\citenamefont {Yuan}\ and\ \citenamefont {Zhang}(2023)}]{Yuan_2023}%
  \BibitemOpen
  \bibfield  {author} {\bibinfo {author} {\bibfnamefont {P.}~\bibnamefont {Yuan}}\ and\ \bibinfo {author} {\bibfnamefont {S.}~\bibnamefont {Zhang}},\ }\bibfield  {title} {\bibinfo {title} {Optimal (controlled) quantum state preparation and improved unitary synthesis by quantum circuits with any number of ancillary qubits},\ }\href {https://doi.org/10.22331/q-2023-03-20-956} {\bibfield  {journal} {\bibinfo  {journal} {Quantum}\ }\textbf {\bibinfo {volume} {7}},\ \bibinfo {pages} {956} (\bibinfo {year} {2023})}\BibitemShut {NoStop}%
\bibitem [{\citenamefont {Carlini}\ \emph {et~al.}(2006)\citenamefont {Carlini}, \citenamefont {Hosoya}, \citenamefont {Koike},\ and\ \citenamefont {Okudaira}}]{carlini2006time}%
  \BibitemOpen
  \bibfield  {author} {\bibinfo {author} {\bibfnamefont {A.}~\bibnamefont {Carlini}}, \bibinfo {author} {\bibfnamefont {A.}~\bibnamefont {Hosoya}}, \bibinfo {author} {\bibfnamefont {T.}~\bibnamefont {Koike}},\ and\ \bibinfo {author} {\bibfnamefont {Y.}~\bibnamefont {Okudaira}},\ }\bibfield  {title} {\bibinfo {title} {Time-optimal quantum evolution},\ }\href@noop {} {\bibfield  {journal} {\bibinfo  {journal} {Physical review letters}\ }\textbf {\bibinfo {volume} {96}},\ \bibinfo {pages} {060503} (\bibinfo {year} {2006})}\BibitemShut {NoStop}%
\bibitem [{\citenamefont {Russell}\ and\ \citenamefont {Stepney}(2015)}]{Russell_2015}%
  \BibitemOpen
  \bibfield  {author} {\bibinfo {author} {\bibfnamefont {B.}~\bibnamefont {Russell}}\ and\ \bibinfo {author} {\bibfnamefont {S.}~\bibnamefont {Stepney}},\ }\bibfield  {title} {\bibinfo {title} {Zermelo navigation in the quantum brachistochrone},\ }\href {https://doi.org/10.1088/1751-8113/48/11/115303} {\bibfield  {journal} {\bibinfo  {journal} {Journal of Physics A: Mathematical and Theoretical}\ }\textbf {\bibinfo {volume} {48}},\ \bibinfo {pages} {115303} (\bibinfo {year} {2015})}\BibitemShut {NoStop}%
\bibitem [{\citenamefont {Geng}\ \emph {et~al.}(2016)\citenamefont {Geng}, \citenamefont {Wu}, \citenamefont {Wang}, \citenamefont {Xu}, \citenamefont {Shi}, \citenamefont {Xie}, \citenamefont {Rong},\ and\ \citenamefont {Du}}]{PhysRevLett.117.170501}%
  \BibitemOpen
  \bibfield  {author} {\bibinfo {author} {\bibfnamefont {J.}~\bibnamefont {Geng}}, \bibinfo {author} {\bibfnamefont {Y.}~\bibnamefont {Wu}}, \bibinfo {author} {\bibfnamefont {X.}~\bibnamefont {Wang}}, \bibinfo {author} {\bibfnamefont {K.}~\bibnamefont {Xu}}, \bibinfo {author} {\bibfnamefont {F.}~\bibnamefont {Shi}}, \bibinfo {author} {\bibfnamefont {Y.}~\bibnamefont {Xie}}, \bibinfo {author} {\bibfnamefont {X.}~\bibnamefont {Rong}},\ and\ \bibinfo {author} {\bibfnamefont {J.}~\bibnamefont {Du}},\ }\bibfield  {title} {\bibinfo {title} {Experimental time-optimal universal control of spin qubits in solids},\ }\href {https://doi.org/10.1103/PhysRevLett.117.170501} {\bibfield  {journal} {\bibinfo  {journal} {Phys. Rev. Lett.}\ }\textbf {\bibinfo {volume} {117}},\ \bibinfo {pages} {170501} (\bibinfo {year} {2016})}\BibitemShut {NoStop}%
\bibitem [{\citenamefont {Han}\ \emph {et~al.}(2020)\citenamefont {Han}, \citenamefont {Dong}, \citenamefont {Liu}, \citenamefont {Yang}, \citenamefont {Song}, \citenamefont {Qiu}, \citenamefont {Li}, \citenamefont {Chu}, \citenamefont {Zheng}, \citenamefont {Xu}, \citenamefont {Huang}, \citenamefont {Wang}, \citenamefont {Yu}, \citenamefont {Tan}, \citenamefont {Lan}, \citenamefont {Yung},\ and\ \citenamefont {Yu}}]{han2020experimental}%
  \BibitemOpen
  \bibfield  {author} {\bibinfo {author} {\bibfnamefont {Z.}~\bibnamefont {Han}}, \bibinfo {author} {\bibfnamefont {Y.}~\bibnamefont {Dong}}, \bibinfo {author} {\bibfnamefont {B.}~\bibnamefont {Liu}}, \bibinfo {author} {\bibfnamefont {X.}~\bibnamefont {Yang}}, \bibinfo {author} {\bibfnamefont {S.}~\bibnamefont {Song}}, \bibinfo {author} {\bibfnamefont {L.}~\bibnamefont {Qiu}}, \bibinfo {author} {\bibfnamefont {D.}~\bibnamefont {Li}}, \bibinfo {author} {\bibfnamefont {J.}~\bibnamefont {Chu}}, \bibinfo {author} {\bibfnamefont {W.}~\bibnamefont {Zheng}}, \bibinfo {author} {\bibfnamefont {J.}~\bibnamefont {Xu}}, \bibinfo {author} {\bibfnamefont {T.}~\bibnamefont {Huang}}, \bibinfo {author} {\bibfnamefont {Z.}~\bibnamefont {Wang}}, \bibinfo {author} {\bibfnamefont {X.}~\bibnamefont {Yu}}, \bibinfo {author} {\bibfnamefont {X.}~\bibnamefont {Tan}}, \bibinfo {author} {\bibfnamefont {D.}~\bibnamefont {Lan}}, \bibinfo {author} {\bibfnamefont {M.-H.}\ \bibnamefont {Yung}},\ and\ \bibinfo {author} {\bibfnamefont
  {Y.}~\bibnamefont {Yu}},\ }\href@noop {} {\bibinfo {title} {Experimental realization of universal time-optimal non-abelian geometric gates}} (\bibinfo {year} {2020}),\ \Eprint {https://arxiv.org/abs/2004.10364} {arXiv:2004.10364 [quant-ph]} \BibitemShut {NoStop}%
\bibitem [{\citenamefont {Dong}\ \emph {et~al.}(2021)\citenamefont {Dong}, \citenamefont {Feng}, \citenamefont {Zheng}, \citenamefont {Chen}, \citenamefont {Guo},\ and\ \citenamefont {Sun}}]{PhysRevResearch.3.043177}%
  \BibitemOpen
  \bibfield  {author} {\bibinfo {author} {\bibfnamefont {Y.}~\bibnamefont {Dong}}, \bibinfo {author} {\bibfnamefont {C.}~\bibnamefont {Feng}}, \bibinfo {author} {\bibfnamefont {Y.}~\bibnamefont {Zheng}}, \bibinfo {author} {\bibfnamefont {X.-D.}\ \bibnamefont {Chen}}, \bibinfo {author} {\bibfnamefont {G.-C.}\ \bibnamefont {Guo}},\ and\ \bibinfo {author} {\bibfnamefont {F.-W.}\ \bibnamefont {Sun}},\ }\bibfield  {title} {\bibinfo {title} {Fast high-fidelity geometric quantum control with quantum brachistochrones},\ }\href {https://doi.org/10.1103/PhysRevResearch.3.043177} {\bibfield  {journal} {\bibinfo  {journal} {Phys. Rev. Res.}\ }\textbf {\bibinfo {volume} {3}},\ \bibinfo {pages} {043177} (\bibinfo {year} {2021})}\BibitemShut {NoStop}%
\bibitem [{\citenamefont {Carlini}\ \emph {et~al.}(2011)\citenamefont {Carlini}, \citenamefont {Hosoya}, \citenamefont {Koike},\ and\ \citenamefont {Okudaira}}]{Carlini_2011}%
  \BibitemOpen
  \bibfield  {author} {\bibinfo {author} {\bibfnamefont {A.}~\bibnamefont {Carlini}}, \bibinfo {author} {\bibfnamefont {A.}~\bibnamefont {Hosoya}}, \bibinfo {author} {\bibfnamefont {T.}~\bibnamefont {Koike}},\ and\ \bibinfo {author} {\bibfnamefont {Y.}~\bibnamefont {Okudaira}},\ }\bibfield  {title} {\bibinfo {title} {Time-optimal cnot between indirectly coupled qubits in a linear ising chain*},\ }\href {https://doi.org/10.1088/1751-8113/44/14/145302} {\bibfield  {journal} {\bibinfo  {journal} {Journal of Physics A: Mathematical and Theoretical}\ }\textbf {\bibinfo {volume} {44}},\ \bibinfo {pages} {145302} (\bibinfo {year} {2011})}\BibitemShut {NoStop}%
\bibitem [{\citenamefont {Carlini}\ and\ \citenamefont {Koike}(2012)}]{Carlini_2012}%
  \BibitemOpen
  \bibfield  {author} {\bibinfo {author} {\bibfnamefont {A.}~\bibnamefont {Carlini}}\ and\ \bibinfo {author} {\bibfnamefont {T.}~\bibnamefont {Koike}},\ }\bibfield  {title} {\bibinfo {title} {Time-optimal transfer of coherence},\ }\href {https://doi.org/10.1103/PhysRevA.86.054302} {\bibfield  {journal} {\bibinfo  {journal} {Phys. Rev. A}\ }\textbf {\bibinfo {volume} {86}},\ \bibinfo {pages} {054302} (\bibinfo {year} {2012})}\BibitemShut {NoStop}%
\bibitem [{\citenamefont {Carlini}\ and\ \citenamefont {Koike}(2013)}]{Carlini_2013}%
  \BibitemOpen
  \bibfield  {author} {\bibinfo {author} {\bibfnamefont {A.}~\bibnamefont {Carlini}}\ and\ \bibinfo {author} {\bibfnamefont {T.}~\bibnamefont {Koike}},\ }\bibfield  {title} {\bibinfo {title} {Time-optimal unitary operations in ising chains: unequal couplings and fixed fidelity},\ }\href {https://doi.org/10.1088/1751-8113/46/4/045307} {\bibfield  {journal} {\bibinfo  {journal} {Journal of Physics A: Mathematical and Theoretical}\ }\textbf {\bibinfo {volume} {46}},\ \bibinfo {pages} {045307} (\bibinfo {year} {2013})}\BibitemShut {NoStop}%
\bibitem [{\citenamefont {Yang}\ and\ \citenamefont {del Campo}(2022)}]{yang2022minimumtime}%
  \BibitemOpen
  \bibfield  {author} {\bibinfo {author} {\bibfnamefont {J.}~\bibnamefont {Yang}}\ and\ \bibinfo {author} {\bibfnamefont {A.}~\bibnamefont {del Campo}},\ }\href@noop {} {\bibinfo {title} {Minimum-time quantum control and the quantum brachistochrone equation}} (\bibinfo {year} {2022}),\ \Eprint {https://arxiv.org/abs/2204.12792} {arXiv:2204.12792 [quant-ph]} \BibitemShut {NoStop}%
\bibitem [{\citenamefont {Burden}\ and\ \citenamefont {Faires}(2011)}]{burden2011numerical}%
  \BibitemOpen
  \bibfield  {author} {\bibinfo {author} {\bibfnamefont {R.}~\bibnamefont {Burden}}\ and\ \bibinfo {author} {\bibfnamefont {J.}~\bibnamefont {Faires}},\ }\href@noop {} {\emph {\bibinfo {title} {Numerical analysis}}}\ (\bibinfo  {publisher} {Brooks/Cole Cengage Learning},\ \bibinfo {year} {2011})\BibitemShut {NoStop}%
\bibitem [{\citenamefont {Stoer}\ and\ \citenamefont {Bulirsch}(2002)}]{bulirsch:02}%
  \BibitemOpen
  \bibfield  {author} {\bibinfo {author} {\bibfnamefont {J.}~\bibnamefont {Stoer}}\ and\ \bibinfo {author} {\bibfnamefont {R.}~\bibnamefont {Bulirsch}},\ }\href@noop {} {\emph {\bibinfo {title} {Introduction to numerical analysis}}},\ Texts in applied mathematics\ (\bibinfo  {publisher} {Springer},\ \bibinfo {year} {2002})\BibitemShut {NoStop}%
\bibitem [{\citenamefont {Nocedal}\ and\ \citenamefont {Wright}(2006)}]{NoceWrig06}%
  \BibitemOpen
  \bibfield  {author} {\bibinfo {author} {\bibfnamefont {J.}~\bibnamefont {Nocedal}}\ and\ \bibinfo {author} {\bibfnamefont {S.~J.}\ \bibnamefont {Wright}},\ }\href@noop {} {\emph {\bibinfo {title} {Numerical Optimization}}},\ \bibinfo {edition} {2nd}\ ed.\ (\bibinfo  {publisher} {Springer},\ \bibinfo {address} {New York, NY, USA},\ \bibinfo {year} {2006})\BibitemShut {NoStop}%
\bibitem [{\citenamefont {Wang}\ \emph {et~al.}(2015)\citenamefont {Wang}, \citenamefont {Allegra}, \citenamefont {Jacobs}, \citenamefont {Lloyd}, \citenamefont {Lupo},\ and\ \citenamefont {Mohseni}}]{wang2015quantum}%
  \BibitemOpen
  \bibfield  {author} {\bibinfo {author} {\bibfnamefont {X.}~\bibnamefont {Wang}}, \bibinfo {author} {\bibfnamefont {M.}~\bibnamefont {Allegra}}, \bibinfo {author} {\bibfnamefont {K.}~\bibnamefont {Jacobs}}, \bibinfo {author} {\bibfnamefont {S.}~\bibnamefont {Lloyd}}, \bibinfo {author} {\bibfnamefont {C.}~\bibnamefont {Lupo}},\ and\ \bibinfo {author} {\bibfnamefont {M.}~\bibnamefont {Mohseni}},\ }\bibfield  {title} {\bibinfo {title} {Quantum brachistochrone curves as geodesics: Obtaining accurate minimum-time protocols for the control of quantum systems},\ }\href@noop {} {\bibfield  {journal} {\bibinfo  {journal} {Physical review letters}\ }\textbf {\bibinfo {volume} {114}},\ \bibinfo {pages} {170501} (\bibinfo {year} {2015})}\BibitemShut {NoStop}%
\bibitem [{\citenamefont {Wang}\ \emph {et~al.}(2021)\citenamefont {Wang}, \citenamefont {Shi},\ and\ \citenamefont {Lan}}]{wang2021quantum}%
  \BibitemOpen
  \bibfield  {author} {\bibinfo {author} {\bibfnamefont {D.}~\bibnamefont {Wang}}, \bibinfo {author} {\bibfnamefont {H.}~\bibnamefont {Shi}},\ and\ \bibinfo {author} {\bibfnamefont {Y.}~\bibnamefont {Lan}},\ }\bibfield  {title} {\bibinfo {title} {Quantum brachistochrone for multiple qubits},\ }\href@noop {} {\bibfield  {journal} {\bibinfo  {journal} {New Journal of Physics}\ }\textbf {\bibinfo {volume} {23}},\ \bibinfo {pages} {083043} (\bibinfo {year} {2021})}\BibitemShut {NoStop}%
\bibitem [{\citenamefont {Ott}(2002)}]{ott2002chaos}%
  \BibitemOpen
  \bibfield  {author} {\bibinfo {author} {\bibfnamefont {E.}~\bibnamefont {Ott}},\ }\href@noop {} {\emph {\bibinfo {title} {Chaos in dynamical systems}}}\ (\bibinfo  {publisher} {Cambridge university press},\ \bibinfo {year} {2002})\BibitemShut {NoStop}%
\bibitem [{\citenamefont {Lorenz}(1963)}]{DeterministicNonperiodicFlow}%
  \BibitemOpen
  \bibfield  {author} {\bibinfo {author} {\bibfnamefont {E.~N.}\ \bibnamefont {Lorenz}},\ }\bibfield  {title} {\bibinfo {title} {Deterministic nonperiodic flow},\ }\href {https://doi.org/https://doi.org/10.1175/1520-0469(1963)020<0130:DNF>2.0.CO;2} {\bibfield  {journal} {\bibinfo  {journal} {Journal of Atmospheric Sciences}\ }\textbf {\bibinfo {volume} {20}},\ \bibinfo {pages} {130 } (\bibinfo {year} {1963})}\BibitemShut {NoStop}%
\bibitem [{\citenamefont {Kolmogorov}(1979)}]{kolmogorov1}%
  \BibitemOpen
  \bibfield  {author} {\bibinfo {author} {\bibfnamefont {A.~A.}\ \bibnamefont {Kolmogorov}},\ }\bibfield  {title} {\bibinfo {title} {Preservation of conditionally periodic movements with small change in the hamilton function},\ }in\ \href {https://doi.org/https://doi.org/10.1007/BFb0021737} {\emph {\bibinfo {booktitle} {Stochastic Behavior in Classical and Quantum Hamiltonian Systems: Volta Memorial Conference, Como, 1977}}}\ (\bibinfo {organization} {Springer},\ \bibinfo {year} {1979})\ pp.\ \bibinfo {pages} {51--56}\BibitemShut {NoStop}%
\bibitem [{\citenamefont {M{\"o}ser}(1962)}]{moser1}%
  \BibitemOpen
  \bibfield  {author} {\bibinfo {author} {\bibfnamefont {J.}~\bibnamefont {M{\"o}ser}},\ }\bibfield  {title} {\bibinfo {title} {On invariant curves of area-preserving mappings of an annulus},\ }\href@noop {} {\bibfield  {journal} {\bibinfo  {journal} {Nachr. Akad. Wiss. G{\"o}ttingen, II}\ ,\ \bibinfo {pages} {1}} (\bibinfo {year} {1962})}\BibitemShut {NoStop}%
\bibitem [{\citenamefont {Arnol'd}(1963{\natexlab{a}})}]{arnold1}%
  \BibitemOpen
  \bibfield  {author} {\bibinfo {author} {\bibfnamefont {V.~I.}\ \bibnamefont {Arnol'd}},\ }\bibfield  {title} {\bibinfo {title} {Proof of a theorem of an kolmogorov on the invariance of quasi-periodic motions under small perturbations of the hamiltonian},\ }\href {https://doi.org/https://doi.org/10.1070/rm1963v018n05abeh004130} {\bibfield  {journal} {\bibinfo  {journal} {Russian Mathematical Surveys}\ ,\ \bibinfo {pages} {13}} (\bibinfo {year} {1963}{\natexlab{a}})}\BibitemShut {NoStop}%
\bibitem [{\citenamefont {Arnol'd}(1963{\natexlab{b}})}]{arnold2}%
  \BibitemOpen
  \bibfield  {author} {\bibinfo {author} {\bibfnamefont {V.~I.}\ \bibnamefont {Arnol'd}},\ }\bibfield  {title} {\bibinfo {title} {Small denominators and problems of stability of motion in classical and celestial mechanics},\ }\href {https://doi.org/https://doi.org/10.1070/RM1963v018n06ABEH001143} {\bibfield  {journal} {\bibinfo  {journal} {Russian Mathematical Surveys}\ }\textbf {\bibinfo {volume} {18}},\ \bibinfo {pages} {85} (\bibinfo {year} {1963}{\natexlab{b}})}\BibitemShut {NoStop}%
\bibitem [{\citenamefont {Moser}(1969)}]{moser2}%
  \BibitemOpen
  \bibfield  {author} {\bibinfo {author} {\bibfnamefont {J.}~\bibnamefont {Moser}},\ }\bibfield  {title} {\bibinfo {title} {On the construction of almost periodic solutions for ordinary differential equations},\ }in\ \href@noop {} {\emph {\bibinfo {booktitle} {Proceedings of the International Conference on Functional Analysis and Related Topics, Tokyo}}}\ (\bibinfo {year} {1969})\ pp.\ \bibinfo {pages} {60--67}\BibitemShut {NoStop}%
\bibitem [{\citenamefont {Semenov-Tian-Shansky}(1981)}]{Semenov1981}%
  \BibitemOpen
  \bibfield  {author} {\bibinfo {author} {\bibfnamefont {M.~R.~A.}\ \bibnamefont {Semenov-Tian-Shansky}},\ }\bibfield  {title} {\bibinfo {title} {Reduction of hamiltonian systems, affine lie algebras and lax equations ii.},\ }\href {http://eudml.org/doc/142805} {\bibfield  {journal} {\bibinfo  {journal} {Inventiones mathematicae}\ }\textbf {\bibinfo {volume} {63}},\ \bibinfo {pages} {423} (\bibinfo {year} {1981})}\BibitemShut {NoStop}%
\bibitem [{\citenamefont {Bertlmann}\ and\ \citenamefont {Krammer}(2008)}]{Bertlmann_2008}%
  \BibitemOpen
  \bibfield  {author} {\bibinfo {author} {\bibfnamefont {R.~A.}\ \bibnamefont {Bertlmann}}\ and\ \bibinfo {author} {\bibfnamefont {P.}~\bibnamefont {Krammer}},\ }\bibfield  {title} {\bibinfo {title} {Bloch vectors for qudits},\ }\href {https://doi.org/10.1088/1751-8113/41/23/235303} {\bibfield  {journal} {\bibinfo  {journal} {Journal of Physics A: Mathematical and Theoretical}\ }\textbf {\bibinfo {volume} {41}},\ \bibinfo {pages} {235303} (\bibinfo {year} {2008})}\BibitemShut {NoStop}%
\bibitem [{\citenamefont {Albertini}\ and\ \citenamefont {D'Alessandro}(2003)}]{1220755}%
  \BibitemOpen
  \bibfield  {author} {\bibinfo {author} {\bibfnamefont {F.}~\bibnamefont {Albertini}}\ and\ \bibinfo {author} {\bibfnamefont {D.}~\bibnamefont {D'Alessandro}},\ }\bibfield  {title} {\bibinfo {title} {Notions of controllability for bilinear multilevel quantum systems},\ }\href {https://doi.org/10.1109/TAC.2003.815027} {\bibfield  {journal} {\bibinfo  {journal} {IEEE Transactions on Automatic Control}\ }\textbf {\bibinfo {volume} {48}},\ \bibinfo {pages} {1399} (\bibinfo {year} {2003})}\BibitemShut {NoStop}%
\bibitem [{\citenamefont {Carlini}\ \emph {et~al.}(2007)\citenamefont {Carlini}, \citenamefont {Hosoya}, \citenamefont {Koike},\ and\ \citenamefont {Okudaira}}]{carlini2007time}%
  \BibitemOpen
  \bibfield  {author} {\bibinfo {author} {\bibfnamefont {A.}~\bibnamefont {Carlini}}, \bibinfo {author} {\bibfnamefont {A.}~\bibnamefont {Hosoya}}, \bibinfo {author} {\bibfnamefont {T.}~\bibnamefont {Koike}},\ and\ \bibinfo {author} {\bibfnamefont {Y.}~\bibnamefont {Okudaira}},\ }\bibfield  {title} {\bibinfo {title} {Time-optimal unitary operations},\ }\href@noop {} {\bibfield  {journal} {\bibinfo  {journal} {Physical Review A}\ }\textbf {\bibinfo {volume} {75}},\ \bibinfo {pages} {042308} (\bibinfo {year} {2007})}\BibitemShut {NoStop}%
\bibitem [{\citenamefont {Gu}\ \emph {et~al.}(2008)\citenamefont {Gu}, \citenamefont {Doherty},\ and\ \citenamefont {Nielsen}}]{PhysRevA.78.032327}%
  \BibitemOpen
  \bibfield  {author} {\bibinfo {author} {\bibfnamefont {M.}~\bibnamefont {Gu}}, \bibinfo {author} {\bibfnamefont {A.}~\bibnamefont {Doherty}},\ and\ \bibinfo {author} {\bibfnamefont {M.~A.}\ \bibnamefont {Nielsen}},\ }\bibfield  {title} {\bibinfo {title} {Quantum control via geometry: An explicit example},\ }\href {https://doi.org/10.1103/PhysRevA.78.032327} {\bibfield  {journal} {\bibinfo  {journal} {Phys. Rev. A}\ }\textbf {\bibinfo {volume} {78}},\ \bibinfo {pages} {032327} (\bibinfo {year} {2008})}\BibitemShut {NoStop}%
\bibitem [{\citenamefont {Helgason}(1979)}]{helgason1979differential}%
  \BibitemOpen
  \bibfield  {author} {\bibinfo {author} {\bibfnamefont {S.}~\bibnamefont {Helgason}},\ }\href@noop {} {\emph {\bibinfo {title} {Differential geometry, Lie groups, and symmetric spaces}}}\ (\bibinfo  {publisher} {Academic press},\ \bibinfo {year} {1979})\BibitemShut {NoStop}%
\bibitem [{\citenamefont {Khaneja}\ and\ \citenamefont {Glaser}(2001)}]{khaneja2001cartan}%
  \BibitemOpen
  \bibfield  {author} {\bibinfo {author} {\bibfnamefont {N.}~\bibnamefont {Khaneja}}\ and\ \bibinfo {author} {\bibfnamefont {S.~J.}\ \bibnamefont {Glaser}},\ }\bibfield  {title} {\bibinfo {title} {Cartan decomposition of su (2n) and control of spin systems},\ }\href@noop {} {\bibfield  {journal} {\bibinfo  {journal} {Chemical Physics}\ }\textbf {\bibinfo {volume} {267}},\ \bibinfo {pages} {11} (\bibinfo {year} {2001})}\BibitemShut {NoStop}%
\bibitem [{\citenamefont {Reyman}\ and\ \citenamefont {Semenov-Tian-Shansky}(1994)}]{Reyman1994GroupTheoreticalMI}%
  \BibitemOpen
  \bibfield  {author} {\bibinfo {author} {\bibfnamefont {A.~G.}\ \bibnamefont {Reyman}}\ and\ \bibinfo {author} {\bibfnamefont {M.~A.}\ \bibnamefont {Semenov-Tian-Shansky}},\ }\bibfield  {title} {\bibinfo {title} {Group-theoretical methods in the theory of finite-dimensional integrable systems}\ }(\bibinfo {year} {1994})\BibitemShut {NoStop}%
\end{thebibliography}%

\widetext

\appendix
\begin{section}{APPENDIX: Acquiring the brachistochrone equations}
Let's consider the action with the Lagrange multipliers $\{\lambda_i\}$:
\begin{equation}
\label{actionunit}
    S_0=\int_0^{T}dt \Big(\frac{1}{2}Tr(\partial_t \hat{U} \partial_t \hat{U}^{\dagger})+i\sum_{\kappa} \lambda_{\kappa}Tr(B_k \partial_t \hat{U} \hat{U}^{\dagger})\Big)
\end{equation}
To facilitate our computations we adopt the index notation. This way the matrix multiplication and Tr operation get simpler. The integrand, then, is written:
\begin{equation}
    s=\frac{1}{2}(\partial \hat{U})_{ab} (\partial \hat{U}^{\dagger})_{ba}+i\sum_{\kappa}\lambda_{\kappa} (B_{\kappa})_{ab}(\partial\hat{U})_{bc}(U^{\dagger})_{ca}
\end{equation}
We are going to vary the "field" U to minimize the action s. To keep it short we separate the action in two part, $s_{1,2}$, respectively:
\begin{equation}
    \delta s_1= \frac{1}{2} (\partial \delta \hat{U})_{ab}(\partial\hat{U}^{\dagger})_{ba}+\frac{1}{2} (\partial\hat{U})_{ab} (\partial \delta\hat{U}^{\dagger})_{ba}
\end{equation}
Up to boundary terms, $\delta s_1$ can be written:
\begin{equation}
    \delta s_1=-\frac{1}{2}\Bigg((\partial^2 \hat{U}^{\dagger})_{ba}(\delta \hat{U})_{ab}+(\partial^2 \hat{U})_{ab}(\delta\hat{U}^{\dagger})_{ba}\Bigg)
\end{equation}
Meanwhile the variation of $s_2$:
\begin{equation}
    \delta s_2=i\sum_{\kappa}\Big( \lambda_{\kappa} (B_{\kappa})_{ab}(\partial\delta\hat{U})_{bc} (\hat{U}^{\dagger})_{ca} + \lambda_{\kappa} (\hat B_{\kappa})_{ab}(\partial\hat{U})_{bc} (\delta \hat{U}^{\dagger})_{ca}\Big)
\end{equation}
Again up to boundary terms:
\begin{equation}
    \delta s_2 = i\sum_{\kappa} \Bigg(-\dot \lambda_{\kappa} (\hat{B}_{\kappa})_{ab} (\delta \hat{U})_{bc} (\hat{U}^{\dagger})_{ca}-\lambda_{\kappa}(\hat B_{\kappa})_{ab}(\delta\hat{U})_{bc}(\partial\hat{U}^{\dagger})_{ca}+\lambda_{\kappa} (\hat B_{\kappa})_{ab}(\partial\hat{U})_{bc} (\delta \hat{U}^{\dagger})_{ca}\Bigg)
\end{equation}
We want to collect all the variation wrt $\delta U$. So we need to take into consideration $\hat{U}\hat{U}^{\dagger}=\mathbb{I} \Rightarrow \delta \hat{U} \hat{U}^{\dagger}+\hat{U}\delta \hat{U}^{\dagger}=0 \Rightarrow \delta \hat{U}^{\dagger}= -\hat{U}^{\dagger} \delta \hat{U} \hat{U}^{\dagger}$. Moreover using the Tr properties, we can re-write $\delta s_1 =\delta \hat{U}_{ab}T^{1}_{ba}, \delta s_2 =\delta \hat{U}_{ab}T^{2}_{ba}$:
\begin{gather}
    T^1=-\frac{1}{2}(\partial^2 \hat{U}^{\dagger}-\hat{U}^{\dagger}\partial^2\hat{U} \hat{U}^{\dagger}) \\
    T^2=-i \sum_{\kappa} \lambda_{\kappa} \partial \hat{U}^{\dagger} \hat B_{\kappa}-i\sum_{\kappa} \dot \lambda_{\kappa} \hat{U}^{\dagger} \hat{B}_{\kappa}-i\sum_{\kappa} \lambda_{\kappa}\hat{U}^{\dagger} \hat{B}_{\kappa} \partial\hat{U} \hat{U}^{\dagger}
\end{gather}
So essentially the "equation of motion" is $T^1+T^2=0$. To simplify it further, we use  $\partial \hat{U}^{\dagger} =-\hat{U}^{\dagger} \partial \hat{U} \hat{U}^{\dagger}$. So we'll have:
\begin{equation}
    T^1=\hat{U}^{\dagger} \partial^2 \hat{U} \hat{U}^{\dagger}+\frac{1}{2}( \partial \hat{U}^{\dagger} \partial \hat{U} \hat{U}^{\dagger}+\hat{U}^{\dagger} \partial \hat{U} \partial U^{\dagger})
\end{equation}
Multiplied with $\hat U$ $T^1$ becomes:
\begin{equation}
    UT^1= \partial^2 \hat{U} \hat{U}^{\dagger} +\frac{1}{2}(\hat{U}\partial U^{\dagger} \partial \hat{U} \hat{U}^{\dagger}+\partial \hat{U} \partial U^{\dagger})
\end{equation}
The terms in the parenthesis are equal (use twice the property $\partial \hat U \hat{U}^{\dagger}=-\hat U \partial\hat{U}^{\dagger} $. Actually, the whole term $i \hat{U} T^1=\partial_t \hat{H}$. Similar manipulation of $\hat U T^2$ gives:
\begin{equation}
    \hat{U} T^2=\sum_{\kappa}\Big( \lambda_{\kappa} \hat{H} \hat{B}_{\kappa}-i \dot{\lambda}_{\kappa} \hat{B}_{\kappa}-\lambda_{\kappa} \hat{B}_{\kappa} \hat{H})=\sum_{\kappa}\Big( \lambda_{\kappa}[\hat{H},\hat{B}_{\kappa}]-i\dot{\lambda}_{\kappa} \hat{B}_{\kappa}\Big)
\end{equation}
As a whole $i\hat{U}(T^1+T^2)=0$ gives:
\begin{equation}
    \partial_t \hat{H} +\sum_{\kappa}\Big( i\lambda_{\kappa}[\hat{H},\hat{B}_{\kappa}]+\dot{\lambda}_{\kappa} \hat{B}_{\kappa}\Big)=0
\end{equation}
This is the known quantum brachistochrone equation. Also the second term we introduced implies:
\begin{equation}
    Tr(\hat H \hat{B}_k)=0, \quad \forall k 
\end{equation}

\textbf{Note}: The variation of action $S_0$ is equivalent to the variation of the action $S_1$:
\begin{equation}
    S_1=\int dt\Bigg(\sqrt{Tr(\partial U \partial U^{\dagger})}+\sum_i \lambda_{i}Tr(\partial \hat U \hat U^{\dagger})\Bigg)
\end{equation}
The only thing that changes is that $S_0(\lambda_{_i})\rightarrow S_1(\lambda_{i}/\lVert H \rVert^2_F)$; but as we show in a next section the Frobenius norm of H remains constant along a trajectory. This means that we should need to rescale the Lagrange multipliers.
\end{section}

\begin{section}{An example of a linearizable AB decomposition}
We consider the $\mathfrak{su}(3)$ algebra and its pseudo-Cartan decomposition $\mathfrak{g}=\mathfrak{l}+\mathfrak{p}$ such that 
$\hat s\in \mathfrak p$ and $\hat l\in \mathfrak l$ are parametrised 
as follows
\begin{equation}
    \hat s=\begin{pmatrix}
        0 & 0 & z_1 \\
        0 & 0 & z_2 \\
        z_1^* & z_2^* & 0
    \end{pmatrix}, \quad  \hat l=\begin{pmatrix}
        r_1 & \psi_1 & 0 \\
        \psi_1^* & r_2 & 0 \\
        0 & 0 & -(r_1+r_2)
    \end{pmatrix}
\end{equation}
Note that the subgroup $\mathfrak l \cong \mathfrak{su}(2) \cross \mathfrak u(1)$.Now let's fix the element $\hat a$. Let that be (arbitrarily):
\begin{equation}
   \hat a=\begin{pmatrix}
        0 & 0 & a_1 \\
        0 & 0 & 0 \\
        a_1 & 0 & 0
    \end{pmatrix},
\end{equation}
where $a_1\in \mathbb{R}$. From that we infer that the subset $\mathfrak{l}_a$ consists of the single (linearly independent) element:
\begin{equation}
   \hat X=\frac{1}{\sqrt{3}}\text{diag}(1,-2,1)
\end{equation}
Therefore $L_p=\sum_{i} m_i Y_i$, where $L_p\in \mathfrak{l}^{\perp}_a$ and $\mathfrak{l}_a^{\perp}$:
\begin{equation}
    \mathfrak{l}_a^{\perp} :=\text{span}\Bigg\{ 
   \hat Y_1=\begin{pmatrix}
        0 & 1 & 0 \\
        1 & 0 & 0 \\
        0 & 0 & 0
    \end{pmatrix}, \hat Y_2=\begin{pmatrix}
        0 & -i & 0 \\
        i & 0 & 0 \\
        0 & 0 & 0
    \end{pmatrix}, \hat Y_3=\text{diag}(1,0,-1)
    \Bigg\}
\end{equation}
Moreover, the generic invariant polynomial can be generated by:
\begin{equation}
    \phi= \sum_{i} c_i Tr\Big( (\hat s+\hat l)^i\Big),
\end{equation}
for arbitrary $\{c_i\}$. From that on can define $b=d\phi(a)$ (i.e. the gradient of $\phi$ at $a$). If one includes at least powers $2,3$ basically what they get is:
\begin{equation}
    b= \kappa_1 a +\kappa_2 X,
\end{equation}
$\kappa_i$'s are arbitrary (and dependent on all $c_i$). From this, we can get $\omega$. In particular:
\begin{equation}
    \omega(l) =
    \begin{cases}
      \mu X & \text{if $l \in \mathfrak{l}_a$} \\
      ad_b (ad_a)^{-1} l & \text{if $l\in \mathfrak{l}^{\perp}_a$}
    \end{cases} 
\end{equation}
again $\mu$ is related to $c$ (but independent from $\kappa_i$'s). Let's see that what happens first with $(ad_a)^{-1}$. We want some operator Z that $i[a,Z]=l$ (physics convention with multiplication with i). So Z for each of the 3 cases is:
\begin{equation}
    Z=\begin{cases}
        \begin{pmatrix}
            0 & 0 & 0 \\
            0 & 0 & \frac{i}{a_1} \\
            0 & -\frac{i}{a_1} & 0
        \end{pmatrix}, & \text{if $l=Y_1$} \\
        \begin{pmatrix}
            0 & 0 & 0 \\
            0 & 0 & -\frac{1}{a_1} \\
            0 & -\frac{1}{a_1} & 0
        \end{pmatrix}, & \text{if $l=Y_2$} \\
        \begin{pmatrix}
            0 & 0 & \frac{i}{2a_1} \\
            0 & 0 & 0 \\
            \frac{-i}{2a_1} & 0 & 0
        \end{pmatrix}, & \text{if $l=Y_3$} 
    \end{cases}
\end{equation}
And consequently the last step is to get simply $i[b,Z]$ for the three different cases;

\begin{equation}
    \omega(l) =
    \begin{cases}
      \mu X & \text{if $l \in \mathfrak{l}_a$} \\
      \frac{\kappa_1}{a_1} Y_1  & \text{if $l=Y_1$} \\
      \frac{\kappa_1}{a_1} Y_2 & \text{if $l=Y_2$} \\
      \frac{\kappa_1}{a_1} Y_3 & \text{if $l=Y_3$}
    \end{cases} 
\end{equation}
Note that to reach to this result we had to set $\kappa_2=0$, since with non-zero $\kappa_2$ it would generate components of $\omega$ that don't belong in $\mathfrak{l}$. With this is mind, we are able to write down the final differential equations. By using the linearity of $\omega(l)$:
\begin{equation}
\label{eulerlileeqs}
    \dot l=[l,\omega(l)]+[s,b], \quad \dot s=[s,\omega],
\end{equation}
If we take the $a_1\rightarrow0$ limit (that is the matrix $a\rightarrow 0$). The system of equations becomes the brachistochrone equation:
\begin{equation}
    \dot l +\dot s=[l+s,\omega]
\end{equation}
Note that the limit is legitimate since $\omega$ remains finite in that limit. With this, we can set $\mu=0, \kappa_1=\alpha_1$, in order to make $\omega$ a projector onto the $\mathfrak l^{\perp}$. This way we create a new integrable protocol. The system of equations one has to solve is:
\begin{equation}
    \begin{pmatrix}
        \dot a_1 \\
        \dot a_2 \\
        \dot a_3 \\
        \dot a_4 \\
        \dot l \\
        \dot m_1 \\
        \dot m_2 \\
        \dot m_3
    \end{pmatrix}
    =
    \begin{pmatrix}
        -a_2 m_2 - a_3 (m_1 + 2 m_3)\\
        a_1 m_2 - a_3 (m_1 + m_3) \\
        a_2 m_1 - a_3 m_2 + 2 a_1 m_3\\
        a_1 m_1 + a_3 m_2 + a_2 m_3\\
        0\\
        \sqrt{3} l m_2\\
        -\sqrt{3} l m_1\\
        0
    \end{pmatrix}, \quad \hat{H}(t)=\begin{pmatrix}
        \frac{l}{\sqrt{3}} & 0 & a_1- i a_3 \\
        0 & -\frac{2l}{\sqrt{3}} & a_2-i a_4 \\
        a_1+ i a_3 & a_2+i a_4 &\frac{l}{\sqrt{3}}
    \end{pmatrix}
\end{equation}
Note that the system is linear. In particular, $m_1,m_2$ can be solved exactly and are some linear combination of $\sin(\sqrt{3}lt), \cos(\sqrt{3}lt)$. Also since $m_3$ is time-independent, the system of $a_i$ becomes a driven linear system. The explicit solution is given:
\begin{equation}
\begin{pmatrix}
         a_1(t) \\
         a_2(t) \\
         a_3(t) \\
         a_4(t) \\
\end{pmatrix}
=\mathcal{T} \exp( \int_{0}^{t}
 \hat M(t') dt') \begin{pmatrix}
         a_1(0) \\
         a_2(0) \\
         a_3(0) \\
         a_4(0) \\
\end{pmatrix}, \quad \hat M(t)=\begin{pmatrix}
    0 & -m_2(t) & -2 m_3(t) & -m_1(t) \\
    m_2(t) & 0 & -m_1(t) & -m_3(t) \\
    2 m_3(t) & m_1(t) & 0 & -m_2(t) \\
    m_1(t) & m_3(t) & m_2(t) & 0
\end{pmatrix}
\end{equation}
Since the matrix $\hat M$ is antisymmetric, we know that the matrix remains finite and there aren't exponential divergences.
\end{section}

\end{document}